\renewcommand\footnotetextcopyrightpermission[1]{}
\def\@authorsaddresses{}
\DeclareMathOperator*{\argmin}{argmin}
\DeclareSIUnit\dollar{\$}
\DeclareSIUnit{\month}{month}
\DeclareSIUnit{\thousand}{k}
\DeclareSIUnit{\million}{M}
\newcommand{\circled}[1]{\ding{\numexpr#1+201}}
\newcommand{\circledtext}[1]{\raisebox{.5pt}{\textcircled{\raisebox{-0.9pt} {\small #1}}}}
\newtheorem*{theorem*}{Theorem}
\newcounter{takeawaycounter}
\crefname{takeawaycounter}{Takeaway}{Takeaways}
\newtcolorbox[use counter=takeawaycounter]{takeawaybox}[2]{
  enhanced, width=\hsize,left=1pt,right=1pt,top=1pt,bottom=1pt,colback=green!65!black!5!white,boxrule=1pt,colframe=green!65!black!25!white,
  label type=takeawaycounter,
  label = #1,
  title = \textcolor{black}{\bf Takeaway~\arabic{takeawaycounter}~#2}
}
\def\SOUL@hlpreamble{%
\setul{}{2.2ex}%         !!!change this value!!! default is 2.5ex
\let\SOUL@stcolor\SOUL@hlcolor
\SOUL@stpreamble
}
\name7
\newcommand{\name}{GreenCache\xspace}
\newcommand{\cachename}{LCS\xspace} % Least carbon saving
\newcommand{\eg}{e.g.,\xspace}
\newcommand{\ie}{i.e.,\xspace}
\newcommand{\unitCI}{gCO\textsubscript{2}e/kWh\xspace}
\newcommand{\unitem}{kgCO\textsubscript{2}e\xspace}
\title{Cache Your Prompt When It's Green --- Carbon-Aware Caching for Large Language Model Serving} 
\keywords{Sustainability, Large language model, Context caching, Carbon emissions}
\author{Yuyang Tian}
\email{yuyang.tian@uwaterloo.ca}
\affiliation{\institution{University of Waterloo}
\city{Waterloo}
\state{ON}
\country{Canada}
}
\author{Desen Sun}
\email{desen.sun@uwaterloo.ca}
\affiliation{\institution{University of Waterloo}
\city{Waterloo}
\state{ON}
\country{Canada}
}
\author{Yi Ding}
\email{yiding@purdue.edu}
\affiliation{\institution{Purdue University}
\city{West Lafayette}
\state{IN}
\country{United States}
}
\author{Sihang Liu}
\email{sihangliu@uwaterloo.ca}
\affiliation{\institution{University of Waterloo}
\city{Waterloo}
\state{ON}
\country{Canada}
}
\begin{document}
\begin{abstract}

As large language models (LLMs) become widely used, their environmental impact, especially the carbon emission problem, has attracted more attention. Prior studies focus on compute-related carbon emissions. In this paper, we find that storage is another key contributor. LLM caching, which saves and reuses KV caches for repeated context,
reduces operational carbon by avoiding redundant computation. However, this benefit comes at the cost of embodied carbon from high-capacity, high-speed SSDs. As LLMs scale, the embodied carbon of storage grows significantly.
To address this tradeoff, we present \name{}, a carbon-aware cache management framework that dynamically derives resource allocation plans for LLM serving. \name{} analyzes the tradeoff between carbon emissions and SLO satisfaction, reconfiguring the resource over time to maintain this balance under dynamic workloads. Evaluations from real traces demonstrate that \name{} achieves an average carbon reduction of 15.1\,\% when serving Llama-3 70B in the FR grid, with reductions reaching up to 25.3\,\%,  while staying within latency constraints for > 90\,\% of requests.

\end{abstract}

\maketitle
\thispagestyle{empty}

%%%%%% -- PAPER CONTENT STARTS-- %%%%%%%%

\section{Introduction}
Large language models (LLMs) become widely adopted across applications---from dialogue systems~\cite{chatgpt, deepseek, Gemini} to healthcare \cite{Peng2023} and developer tools \cite{copilot}. Although LLMs demonstrate remarkable capabilities, they raise concerns about the environmental impact~\cite{ding2024sustainable,stojkovic2025dynamollm,faiz2024llmcarbon,li2025ecoserve,greenLLM,greenllm_ieee_cal,clover,li2024towards,samsi2023fromwordstowatt}, particularly in terms of carbon emissions measured in carbon dioxide equivalent (CO\textsubscript{2}e).

Carbon emissions in LLM serving systems come from two main sources: operational carbon and embodied carbon. 
Operational carbon emissions come from the electricity consumed by computing.
The amount of carbon emitted per kilowatt-hour (kWh) depends on the grid's energy source mix, quantified by the \emph{carbon intensity} (CI), measured in the unit of \unitCI{}~\cite{maji2022carboncast}. Renewable energy sources like solar and hydro have low carbon intensities, while non-renewable sources like coal and gas have high carbon intensities. Embodied carbon emissions come from the hardware manufacturing process~\cite{wu2022sustainable,ACT}. Serving LLMs at scale is compute-intensive. It demands high power that will likely lead to high operational carbon emissions, and at the same time, requires high-performance hardware like high-end GPUs and machine learning accelerators that have high embodied carbon. For example, serving a single LLM prompt can emit over 20 times more carbon than a conventional web search query \cite{whyyourinternet2020,chatgptcarbon2023}. 

To reduce the carbon emissions of LLM serving, prior work has focused primarily on optimizing the compute side (e.g., GPUs). For example, DynamoLLM~\cite{stojkovic2025dynamollm} improves energy efficiency to lower operational emissions; GreenLLM~\cite{greenLLM, greenllm_ieee_cal} and EcoServe~\cite{li2025ecoserve} reuse older GPUs to lower embodied emissions. However, these efforts largely overlook another major contributor to carbon impact in LLM serving systems: storage.

While compute-related carbon emissions have been widely studied, the carbon emissions of storage remain an underexplored but critical component in sustainable LLM deployment. LLM tasks often involve a long context to provide specific information for better generation quality, such as long chat histories in multi-turn conversations and long documents in comprehension tasks. Processing these long inputs can require thousands of tokens before generation \cite{cachegen}. To address this, prior work \cite{cachedattention, cachegen, hcache, yao2024cacheblend} has introduced context caching, which saves the KV cache of contexts in fast storage and reuses them when needed again. This reduces redundant computation, lowers latency, and decreases resource usage, ultimately reducing operational carbon emissions.

Despite the benefits, context caching requires high-capacity and high-speed storage such as SSDs, which come with embodied carbon emissions. Prior work \cite{tannu2023diretysecretssd,ke2024improving,seagate_embodied,bhagavathula2024understanding,mcallister2024call,mcallister2025fairywren,lca-dell-r740} has shown that SSDs contribute to over 75\,\% of datacenter servers' total embodied carbon emissions \cite{tannu2023diretysecretssd,lca-dell-r740}. This means that caching does not come for free, but can significantly increase the overall carbon emissions of LLM serving. 
In a cloud environment, storage is provisioned on demand and can scale. 
Therefore, the embodied carbon from service is counted by the storage size and its time of usage, like prior studies on embodied carbon of hardware \cite{ACT,han2025fair}.
However, there is no effective characterization or optimization strategy that optimizes storage carbon for LLM serving.

In this work, we present \name{}, a carbon-aware caching framework to mitigate storage-related carbon emissions from caching in LLM serving systems. 
Balancing the tradeoff between increased embodied carbon and reduced operational carbon makes identifying the optimal caching configuration particularly challenging. 
To guide cache decisions, we characterize LLM serving caching behavior and make two key observations.
First, higher loads amplify both latency reduction and carbon savings from caching, as cache hits substantially shorten the prefill phase.
Second, carbon emission savings are strongly influenced by the carbon intensity (CI). 
When the carbon intensity becomes high, operational carbon from computation dominates; conversely, when it is low, embodied carbon from cache storage (SSD in this study) becomes the primary contributor. 
Both the load and CI fluctuate in a realistic environment, as load varies with user activity~\cite{stojkovic2025dynamollm}, and CI changes with real-time energy sources~\cite{maji2022carboncast}.
These insights reveal that the optimal cache configuration is jointly determined by these dynamic factors --- the load, which governs operational carbon savings through latency reduction, and the carbon intensity, which affects the net carbon savings.
Accordingly, \name{} adapts to both load and CI conditions to continuously optimize cache configuration for optimal carbon efficiency.

\name{} first profiles the performance and power under various workloads and cache sizes.
In practice, \name{} predicts both carbon intensity and workload based on historical values, enabling adaptive cache reconfiguration that anticipates future dynamics up to 24 hours in advance to preserve enough time to allow sufficient warm-up.
However, finding the most carbon-efficient cache size is not sufficient. 
LLM serving applications are typically required to meet Service Level Objectives (SLOs) for both Time To First Token (TTFT) and Time Per Output Token (TPOT), with a target attainment rate (\eg 90\,\%) to ensure service quality~\cite{distserve,stojkovic2025dynamollm,moocacke2024arxiv}.
To ensure SLO attainment while achieving carbon efficiency, we model this optimization as an Integer Linear Programming (ILP) problem. 
\name{} integrates an ILP solver that takes the predicted carbon intensity and load, current TTFT and TPOT, and the SLOs as input, and determines the most carbon-efficient cache configuration that meets the SLO target, effectively balancing the operational carbon savings (from reduced computation) and embodied carbon emissions (from storage). 

In addition, we introduce a new cache replacement policy for \name{}, as conventional cache replacement policies such as Least Recently Used (LRU) do not optimize for carbon emissions. 
We find that longer reused context (\ie more tokens upon a cache hit) yields more computation savings, reducing the operational carbon.
On the other hand, the cache entry size varies, leading to different embodied carbon overheads. 
Therefore, we design a new policy, Least Carbon Savings (\cachename{}), that incorporates not only cache access frequency and recency, but also the embodied carbon costs and operational carbon savings from reused context.

We implement \name{} on top of an open-source context caching system, LMCache~\cite{lmcache}. 
\name{} leverages input SLO constraints and real-time carbon intensity to minimize total carbon emissions in serving LLMs while meeting the performance SLO attainment goal. 
We adapt two LLMs on top of the \name{} framework: Llama-3 70B and 8B models \cite{llama3}, and evaluate on two tasks: multi-turn conversation using a ShareGPT dataset \cite{sharegpt_dataset} and document reading comprehension using the TriviaQA dataset \cite{joshi2017triviaqa}. 
For a realistic evaluation, we take 24-hour LLM request rates from the Azure dataset~\cite{azurellm2024} and carbon intensities from the CarbonCast dataset~\cite{maji2022carboncast}.
Experiments are conducted on a system with 4$\times$ NVIDIA L40 GPUs and configurable SSD storage of up to 16 TB. 

The contributions of this paper are as follows:
\begin{itemize}[leftmargin=*]
    \item To the best of our knowledge, this is the first work that systematically studies the embodied carbon emissions of storage due to caching in LLM serving systems. 
    \item We design and implement a caching framework for LLM serving that dynamically reconfigures cache size in response to fluctuating CIs and workloads, while meeting the SLO attainment goal. 
    \item Our evaluation based on realistic request rate and CI traces demonstrates that \name{} reduces average carbon emissions by 15.1\,\% in low-CI grids such as FR, and can still achieve up to a 6.91\,\% reduction in higher-CI grids like CISO when serving the Llama-3 70B model. 
    \item  The source code of this work is available at \url{https://greencache.persistentmemory.org}.

\end{itemize}
\section{Background}

In this section, we first introduce the LLM serving process. We then discuss the cache mechanism for LLM serving for performance optimization. Finally, we discuss carbon accounting methods for operational and embodied emissions.

\subsection{LLM Serving Process}

LLM serving, typically based on the Transformer architecture, operates by predicting the output in a sequence given a prompt. 
LLM serving has two phases: \emph{prefill} and \emph{decode}.
The prefill phase (or prompt phase) processes the input prompt and generates the first output token. Its performance metric is measured by \emph{Time To First Token (TTFT)} --- the time from receiving the prompt to producing the first token.
The decode phase (or token generation phase) generates tokens one by one in an autoregressive manner. Its performance metric is measured by \emph{Time Per Output Token (TPOT)} --- the time between each output token. 

While attention operations are computationally intensive, most operations with Key and Value remain the same across decode iterations, providing an opportunity to store and reuse these tokens, referred to as the \emph{KV cache} \cite{pagedattention}. With \emph{KV cache}, the decode phase becomes memory-intensive rather than compute-intensive, and increasing the batch size can significantly improve throughput. 

In LLM serving systems, batching is a key technique for maximizing throughput \cite{pagedattention,orca,distserve}. Naively batching all existing requests and disallowing batch enlargement mid-processing results in suboptimal throughput.
To address this, recent work introduces continuous batching \cite{orca}, which allows new requests to be inserted during the decode phase, significantly improving GPU efficiency.

\subsection{Caching for LLM Serving}
\label{subsec:caching-background}

To further reduce the heavy computation in the prefill phase, recent work has introduced caching techniques~\cite{cachedattention, cachegen, cheng2024large, yao2024cacheblend, pagedattention, hcache}. These studies indicate that many requests share overlapping tokens. 
For example, in multi-turn conversations, later turns often include the entire chat history. In reading comprehension, multiple questions may ask about the same document. Caching enables reusing the KV cache from previous requests, avoiding redundant computation for repeated context and trading off extra storage for better performance.

\begin{figure}[t]
  \centering
    \subfloat[Caching procedure.]{\includegraphics[width=0.48\linewidth]{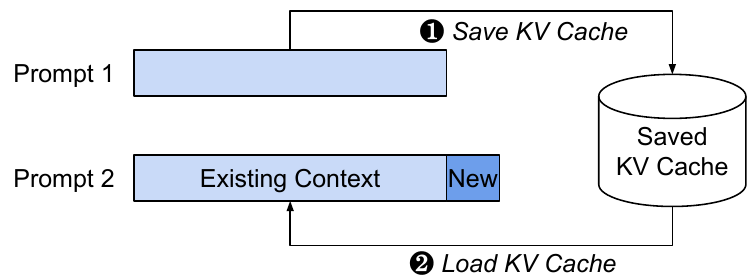}
    \label{fig:prefix_cache}}
    \hfill
    \subfloat[Timeline of no cache and a cache hit.]{\includegraphics[width=0.48\linewidth]{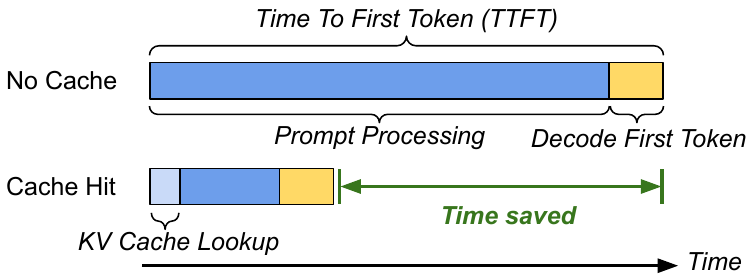}
    \label{fig:cache_timeline}}
  \caption{\label{fig:cached_attention} Illustration of caching for LLM serving.}
  \Description{}
\end{figure}

\Cref{fig:prefix_cache} illustrates a system that reuses the KV cache of the context. 
Step~\circled{1}: upon a new request Prompt~1, the system processes it and saves its KV cache to storage. 
Step~\circled{2}: when the next request Prompt~2 arrives, the system looks up the cache for the existing context. When hit, the system stitches the KV cache of the cached tokens and the new tokens, and processes them together. 
\Cref{fig:cache_timeline} depicts the timeline with or without cache. 
Due to the intensive attention computation, KV cache lookup incurs a much lower overhead than processing. For example, the average TTFT in executing prompts in the ShareGPT dataset \cite{sharegpt_dataset} using the Llama-3 70B model on $4 \times$ L40 GPUs is 1.7 seconds, but loading the KV cache of previous context only takes 0.03 seconds on average. 
Therefore, cache reduces TTFT significantly by eliminating redundant processing in the prefill phase. 
Even though caching does not reduce computation in the decode phase, it reduces the waiting time for the decode phase in continuous batching, improving the decode latency. 
Despite the performance gains, an effective caching system fundamentally relies on the storage backend to maintain a high volume of KV cache data that can be accessed at high speed. 
For example, caching a 1000-token context for 1 million prompts of a Llama-3 70B model takes more than 300 TB \cite{lmcache_calculator}.

\subsection{Carbon Emission Accounting}
\label{subsec:carbon-background}

\begin{figure}
    \centering
    \includegraphics[width=\linewidth]{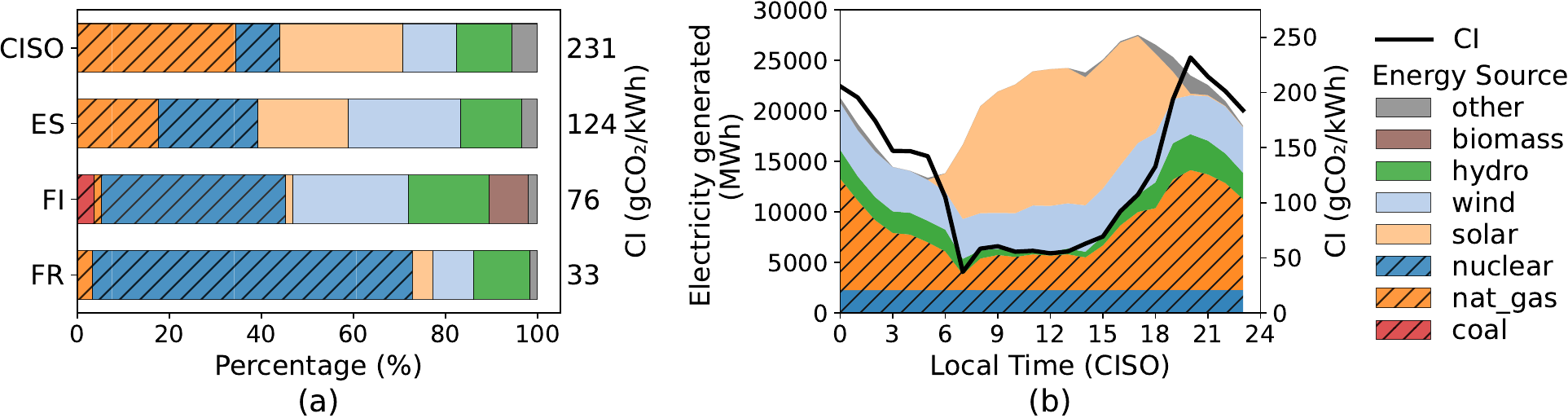}
    \caption{(a) Average carbon intensity (CI) and energy sources of four grids in 2024 \cite{ele_maps}. (b) CI variation due to energy sources of the CISO grid on July 6, 2022 \cite{maji2022carboncast}.}
    \label{fig:ci-decomposition}
    \Description{}
\end{figure}

LLM serving is highly compute-intensive and results in significant environmental impact. According to the carbon modeling tool from prior work~\cite{ACT,Sustainable_hpc,clover,faiz2024llmcarbon,greenLLM,greenllm_ieee_cal,nguyen2024towards,li2024towards}, the total carbon emissions of serving LLMs on a hardware platform are the sum of the operational ($C_{\rm o}$) and embodied ($C_{\rm e}$) carbon emissions. 
As the cloud dynamically allocates resources, a user's embodied carbon emissions are attributed to the duration for which the computing platform is allocated. 
Therefore, the embodied carbon emissions are proportional to the execution time $\texttt{T}$, amortized over the platform's lifetime ($\texttt{LT}$). 
For example, the typical lifetime of hardware components is around 5 years~\cite{ACT,ostrouchov2020gpulife}, which we follow in the rest of this paper unless specified. 
We calculate the total carbon emissions as 
\begin{align}\label{eq:carbon-total}
C = C_{\rm o} + \frac{\texttt{T}}{\texttt{LT}} C_{\rm e}.
\end{align}

\begin{table}[t]
\small
    \caption{Embodied carbon of major server components.}
    \label{tab:embodied}
    % \vspace{-1em}
    \centering
    \begin{tabular}{lll}
    \toprule
    \textbf{Component} & \textbf{Type} & \textbf{Embodied Carbon} \\  \midrule
       CPU  & AMD 7453 & 9.3 kgCO\textsubscript{2}e \cite{ACT} \\
       GPU  & 4 \texttimes{} NVIDIA L40 & 106.4 kgCO\textsubscript{2}e \cite{ACT,nguyen2024towards} \\
       Memory & 512 GB DDR4 & 30.8 kgCO\textsubscript{2}e \cite{ACT} \\
       Storage & (Up to) 16 TB SSD & (Up to) 480 kgCO\textsubscript{2}e \cite{ACT} \\
    \bottomrule
    \end{tabular}
\end{table}

The \emph{operational carbon} ($C_{\rm o}$) is computed as the product of the energy $E$ consumed by running LLMs on the target hardware and the carbon intensity ($\texttt{CI}$) during the execution
\begin{align}\label{eq:carbon-op}
  C_{\rm o} = E \times \texttt{CI}, 
\end{align}
where \texttt{CI} measures the amount of carbon dioxide equivalent emitted per unit of electricity generation (i.e., \unitCI{}). A lower carbon intensity value indicates a higher proportion of renewable energy used in electricity generation from the power grid.
\Cref{fig:ci-decomposition}a shows four grids: FR (France), FI (Finland), ES (Spain), and CISO (California, USA), which differ in their compositions of energy sources.
\Cref{fig:ci-decomposition}b shows the electricity sources vary in the CISO grid throughout a day.

The \emph{embodied carbon} ($C_{\rm e}$) from hardware comes primarily from the semiconductor manufacturing process. 
In this work, we focus on the main hardware components that contribute to the majority of embodied carbon for LLM serving, including GPUs, CPUs, memory, and storage.
\Cref{tab:embodied} lists the hardware components and their embodied carbon, which is modeled based on their processor chip area and memory capacity~\cite{ACT}. 
We calculate the total embodied carbon emissions of the system running LLM service as
\begin{align}\label{eq:carbon-eb}
	C_{\rm e} = \hspace{-4mm}\sum_{{\rm comp}\in{\rm Platform}}\hspace{-4mm}C_{\rm e, comp} = C_{\rm e, GPU} + C_{\rm e, CPU} + C_{\rm e, Mem} + C_{\rm e, SSD}.
\end{align} 
While prior studies have examined the embodied carbon of compute components like CPUs and GPUs, the impact of storage, particularly SSDs, for LLM serving remains underexplored, despite evidence that it contributes significantly to embodied emissions~\cite{Sustainable_hpc,tannu2023diretysecretssd,mcallister2024call,bhagavathula2024understanding,lca-dell-r740,seagate_embodied}.
In our platform, the embodied carbon from SSDs consumes 76.6\,\% of the total embodied carbon of the server, similar to the over 75\,\% fraction reported by a prior study \cite{tannu2023diretysecretssd}.

\section{Performance and Carbon Emission of LLM Caching}

In this section, we analyze the performance and carbon emissions of caching in LLM serving and summarize the observations as takeaways.  
We illustrate the results using the Llama~3 70B model~\cite{llama3} and the ShareGPT dataset~\cite{sharegpt_dataset}, evaluated on the server described in~\Cref{tab:embodied}. We use LMCache~\cite{lmcache} as the caching system, which saves context on 16 TB SSDs. 
The cache has been initialized with 200k prompts, and we record the performance and carbon emissions of 500 prompts. 
Details of our evaluation methodology are provided in \Cref{subsec:methodology}.

\subsection{Performance of LLM Caching }

We first study the performance of context caching in LLM serving by focusing on two factors: context length and request rate. 
Since the prefill phase accounts for only a small fraction of the total latency, we present all latency results in log scale to better visualize the prefill latency.

\begin{figure}
\begin{minipage}[t]{0.51\linewidth}
    \centering
    \includegraphics[width=\linewidth]{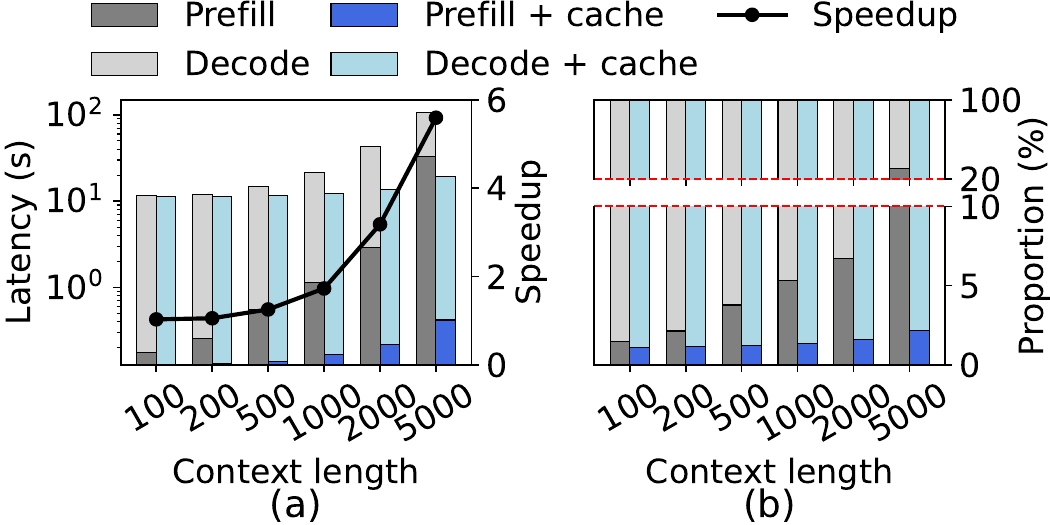}
    \caption{(a) Latency and speedup from caching under different context lengths. (b) Latency breakdown. }
    \label{fig:context_length}
    \Description{}
\end{minipage}
\hfill
\begin{minipage}[t]{0.45\linewidth}
    \centering
    \includegraphics[width=1\linewidth]{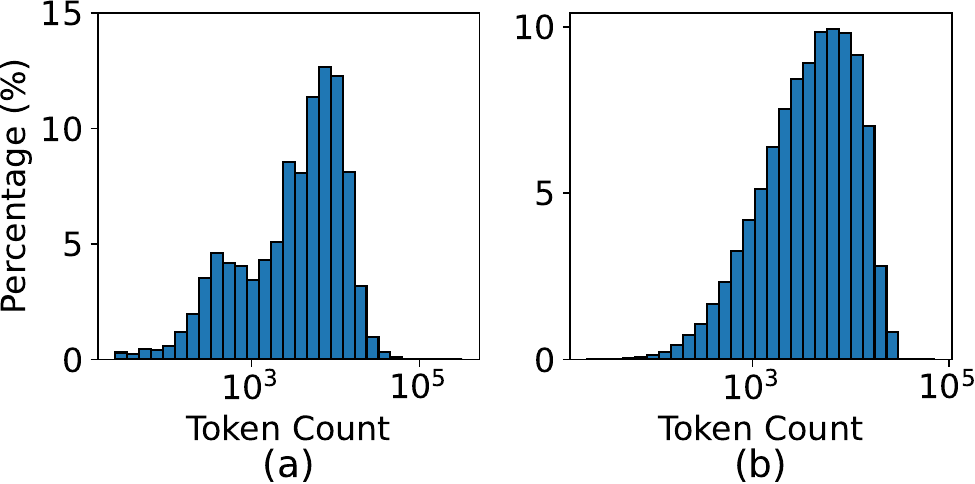}
    \caption{Distribution of context length in (a) ShareGPT \cite{sharegpt_dataset} and (b) TriviaQA \cite{joshi2017triviaqa}. }
    \label{fig:cdf-sharegpt}
    \Description{}
\end{minipage}
\end{figure}

\subsubsection{Impact of Context Length}
\label{subsubsec:context_length}

Caching reduces the computation of the existing context. Therefore, we first evaluate its benefits under different context lengths.
\Cref{fig:context_length}a shows the prefill and decode latency breakdown and speedup from caching under variable context lengths.
The speedup increases as the cached context gets longer.
Furthermore, caching brings more improvements for the prefill phase than the decode phase, as the fraction in \Cref{fig:context_length}b shows. This is because prefill directly benefits from cached context, but decode indirectly benefits from reduced waiting time. 

We further analyze the context length in two common LLM tasks where caching can be useful.
First, we study the context length of a multi-turn conversation task using a ShareGPT dataset~\cite{sharegpt,sharegpt_dataset}. Each conversation contains a number of turns between the user and the LLM chatbot. 
We observe that the context length varies, as \Cref{fig:cdf-sharegpt} shows that 77.2\,\% of prompts have over 1000 tokens from prior conversation turns. 
Second, we study another task, document reading comprehension, using the TriviaQA dataset \cite{joshi2017triviaqa}. 
Unlike multi-turn conversation, its context is the whole document that corresponds to the user's question, with an average context length of 5880 tokens. 
Therefore, the majority of requests benefit from context caching.

\begin{takeawaybox}{box:length-perf}{}
Longer context lengths yield more benefits from caching, as more redundant prefill computation can be eliminated. However, context length varies in tasks and prompts, leading to variable benefits. Therefore, longer contexts should be prioritized in caching. 
\end{takeawaybox}

\subsubsection{Impact of Request Rate}

The request rate is a key factor that impacts the performance benefits from caching. 
We conduct an experiment to evaluate the latency under different request rates, as \Cref{fig:rate-time} shows. 
As the request rate becomes higher, the average latency reduction of prefill increases. 
At the same time, the average decoding latency also reduces and benefits more when the request rate is higher. 
This is because reduced prefill latency also saves waiting time for decoding, as discussed in \Cref{subsec:caching-background}.

\begin{takeawaybox}{box:rate-perf}{}
Higher request rates benefit more from caching, as the computation reduction by cached context is more prominent when the request rate is high. Therefore, more context should be cached when the system is under high load. 
\end{takeawaybox}

\begin{figure}
\begin{minipage}[t]{0.48\linewidth}
    \centering
    \includegraphics[width=\linewidth]{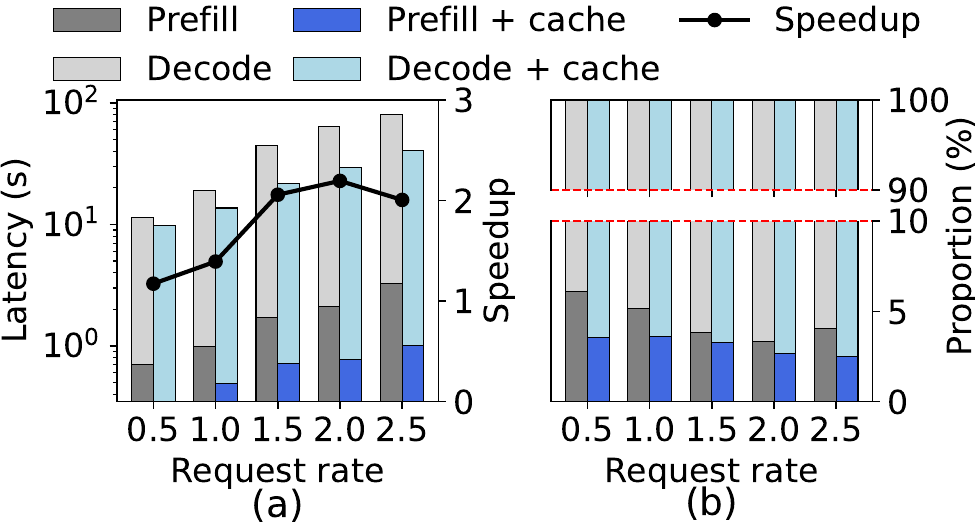}
    \caption{(a) Latency of prefill and decode under different request rates, and the speedup from caching. (b)~Fraction of prefill and decode latency.}
    \label{fig:rate-time}
    \Description{}
\end{minipage}
\hfill
\begin{minipage}[t]{0.48\linewidth}
    \centering
    \includegraphics[width=\linewidth]{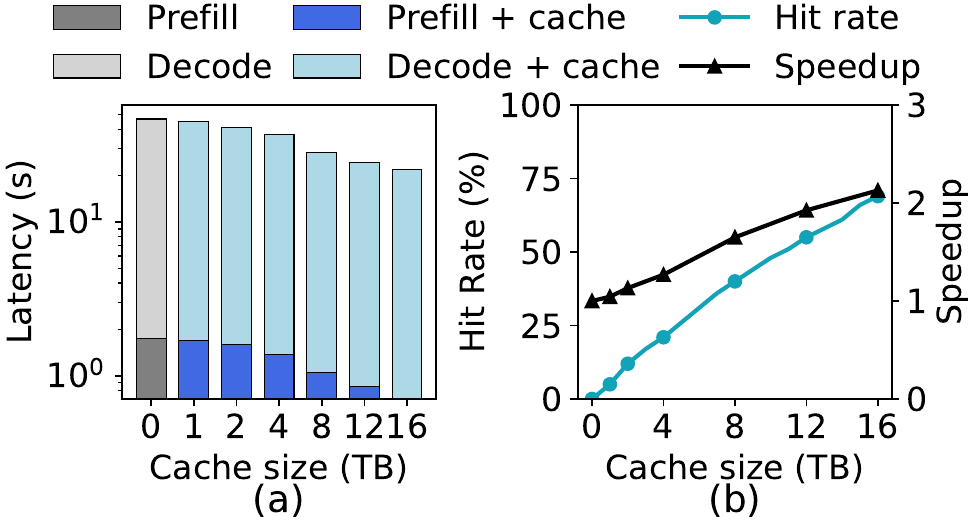}
    \caption{(a) Latency and speedup from caching under different cache sizes. (b) Cache hit rate.}
    \label{fig:size-time}
    \Description{}
\end{minipage}
\end{figure}

\subsubsection{Impact of Cache Size}
\label{subsubsec:cache_size}

The cache size is another key factor for performance benefits. 
We set a fixed request rate of 1.5 prompts/s and vary the cache sizes from 1 to 16 TB. Then, we compare the performance with the no-cache configuration. 
\Cref{fig:size-time} shows that larger cache sizes yield higher speedup and lower latency, as more prompts hit and reuse the KV cache of the cached context. 

\begin{takeawaybox}{box:size-perf}{}
Larger cache sizes improve performance by increasing hit rates, allowing more prompts to reuse KV caches. However, the benefit may not always scale linearly, as the hit rate improvement slows down once the cache reaches a certain size.
\end{takeawaybox}

\subsection{Carbon Emissions of LLM Caching}

In this subsection, we model and characterize the carbon emissions of LLM caching systems.

\subsubsection{Cache Carbon Modeling} \label{subsubsec:cache-carbon-modeling}

\Cref{subsec:carbon-background} introduces the standard carbon modeling method for computing systems. 
Building upon it, we introduce a carbon modeling approach for LLM caching. 
In this work, we use SSD, the most commonly used storage in LLM serving systems \cite{hcache, lmcache,yao2024cacheblend, cachedattention,cachegen}, as the cache hardware.\footnote{The same carbon modeling also applies to other caching mediums, such as DRAM, CXL-attached memory, and HDD. } 
In a cloud environment, storage is provisioned on demand and can be flexibly resized. 
In LLM context caching, the cache size reflects the number of cached tokens, as the service acquires more storage when the number of cached tokens increases and releases storage when the demand for cached tokens decreases.
We model the cache’s embodied carbon as proportional to the current allocated storage size $\rm S_{Alloc}$, reflecting a cloud scenario where only the reserved storage contributes to embodied carbon, rather than the entire device.
Let $C_{\rm e, SSD}^{\rm Unit}$ denote the embodied carbon per unit of storage (e.g., \unitem{} per TB).
The embodied carbon attributed to the allocated SSD is then scaled by the fraction of its lifetime during which the storage is occupied.
In our evaluation system, we limit the SSD size to 16 TB, as shown in \Cref{tab:embodied}.
Accordingly, we define the carbon emissions of the cache as follows:
\begin{align}\label{eq:ssd-embodied}
C_{\rm e, Cache} = {\rm S_{Alloc}} \times \frac{\texttt{T}}{\texttt{LT}} C_{\rm e, SSD}^{\rm Unit}.
\end{align}
For operational carbon emissions, we first measure the power of every server component when serving LLM and the prompt execution latency to obtain energy consumption. 
Then, we multiply it by the carbon intensity of the grid, as introduced in \Cref{eq:carbon-op}.
Together, the total carbon emissions can be calculated using \Cref{eq:carbon-total}, expressed as:
\begin{align}\label{eq:cache+other}
C = E\times\texttt{CI}+C_{\rm e,Cache} + \frac{\texttt{T}}{\texttt{LT}}C_{\rm e, Others}. 
\end{align}
This equation indicates a tradeoff between embodied and operational carbon emissions.
Caching improves LLM serving performance by lowering the execution time and energy consumption, which reduces operational carbon emissions (the $E\times\texttt{CI}$ component).
However, caching that uses extra storage incurs embodied carbon (the $C_{\rm e,Cache}$ component). Next, we examine how different serving conditions impact the carbon emissions of LLM caching.

\subsubsection{Carbon Emissions of LLM Caching.}
\label{subsubsec:carbon-motivation}

\begin{figure}
\begin{subfigure}[b]{0.41\linewidth}
    \centering
    \includegraphics[trim={0 0 2.6in 0}, clip, width=\linewidth]{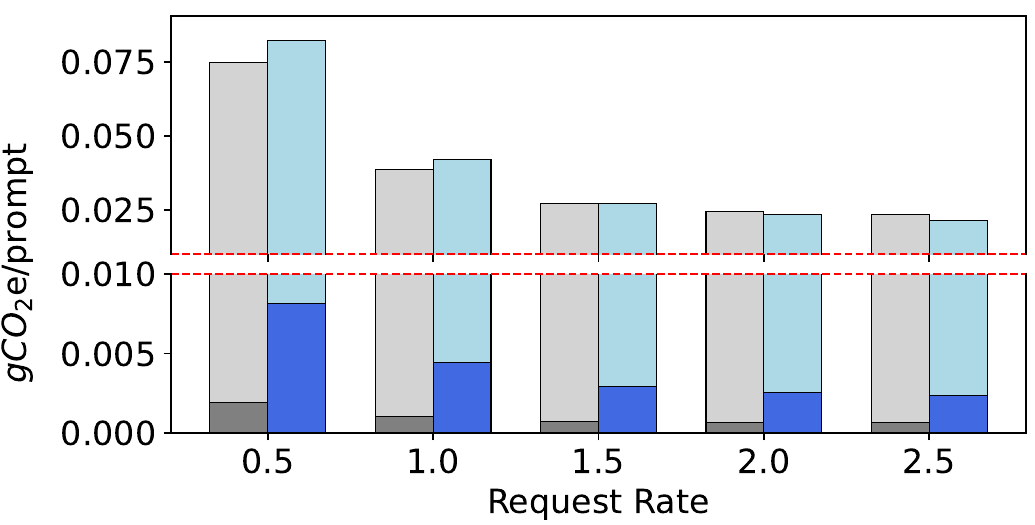}
    \caption{}\label{fig:carbon-with-qps}
    \Description{}
\end{subfigure}
\hfill
\begin{subfigure}[b]{0.56\linewidth}
    \centering
    \includegraphics[width=\linewidth]{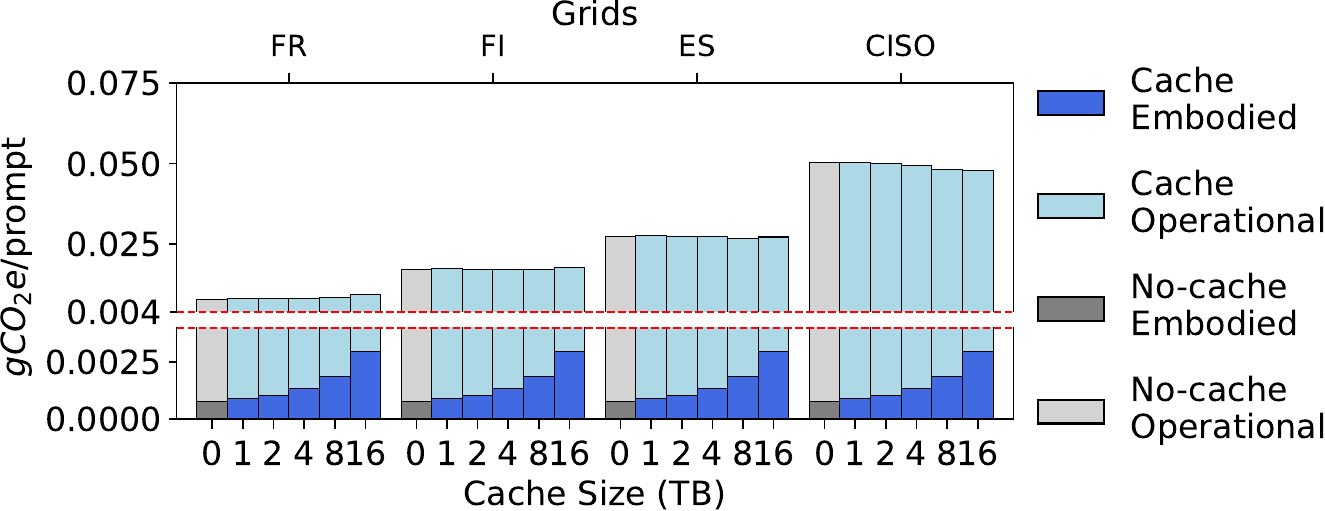}
    \caption{}\label{fig:carbon-with-ci}
    \Description{}
\end{subfigure}
\caption{Carbon emissions per request (a) in the ES grid under different request rates, and (b) under different cache sizes and grids (based on grid average CI in \Cref{fig:ci-decomposition}a). }
\end{figure}

We first measure the caching system’s carbon emissions under varying request rates. 
\Cref{fig:carbon-with-qps} demonstrates that the average per-prompt carbon emissions vary with the request rate.
All carbon emissions are calculated with ES grid's carbon intensity. 
At low request rates, caching incurs higher emissions due to the embodied carbon of SSDs. As the request rate increases, operational carbon savings from caching become dominant, while the relative impact of embodied carbon diminishes.

\begin{takeawaybox}{box:rate-carbon}{}
The carbon emission savings depend on the request rate. At higher loads, caching yields greater carbon reductions by effectively mitigating the elevated operational carbon. 
\end{takeawaybox}

\begin{figure}
    \centering
    \includegraphics[width=\linewidth]{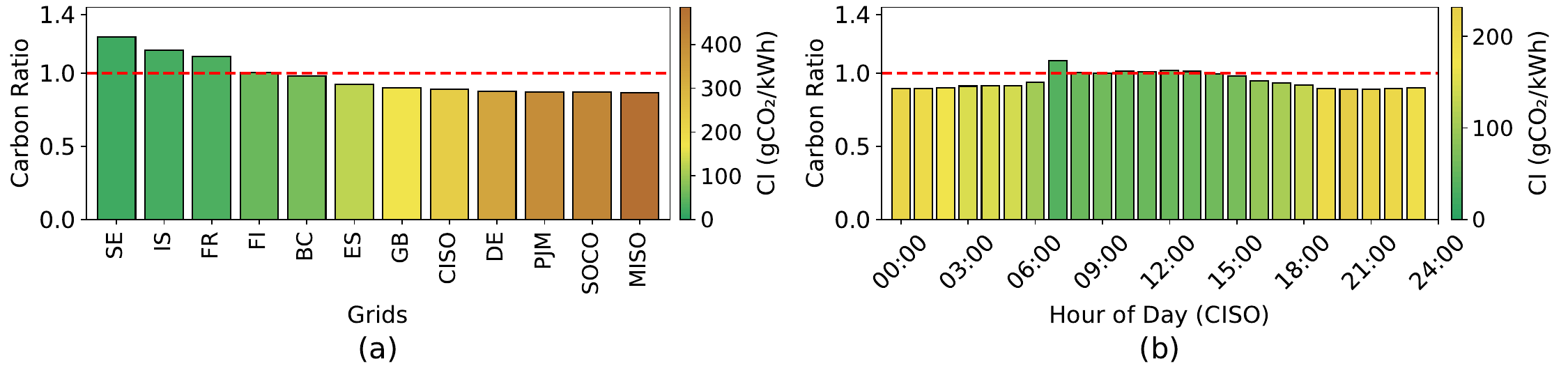}
    \caption{Carbon emission savings from caching in 12 grids. A ratio < 1 indicates carbon emission reduction.}
    \label{fig:ci-ratio-25}
    \Description{}
\end{figure}

We next study the impact of carbon intensity.
Under a request rate of 1.5 prompts/s, we calculate the total carbon emission savings using the average carbon intensity (CI) from 12 grids \cite{ele_maps}, as shown in \Cref{fig:ci-ratio-25}a.
Grids with a higher CI (on the right side) benefit more from caching. 
For instance, a 16 TB cache reduces carbon emissions by 7.5\,\% in MISO, where the CI reaches 485~\unitCI{}.
In contrast, low-CI grids (on the left side) benefit less or even incur higher emissions from caching. For example, in FR, where the CI is only 33~\unitCI{}, the same cache increases carbon emissions by 16.5\,\%.
As CI fluctuates within the same grid due to changing energy sources, we next evaluate the carbon emission savings of a 16 TB cache over a day in the CISO grid, as shown in \Cref{fig:ci-ratio-25}b.
The CI drops to its daily minimum of 37~\unitCI{} at 7 AM when the grid is mainly powered by renewable energy sources (as shown in \Cref{fig:ci-decomposition}b), making caching not carbon-efficient.
Conversely, at 8 PM, CI peaks at 232~\unitCI{}, and caching achieves its maximum carbon reduction.

The shift in carbon emission savings results from changes in the primary carbon contributor within the LLM serving system.
Consequently, when CI is high, a larger cache can save more computations and significantly reduce operational carbon, reducing the total carbon per prompt. 
Conversely, at low CI, embodied carbon dominates the total carbon emissions. Thus, increasing the cache size may negatively impact the carbon emissions.

To better understand this tradeoff between embodied carbon and operational carbon, we further evaluate the carbon emission savings under cache sizes ranging from 1 to 16 TB with the average CI of four grids in \Cref{fig:ci-decomposition}, as shown in \Cref{fig:carbon-with-ci}.
In all grids, the fraction of embodied carbon rises with cache size, while operational carbon reduction is more significant in high-CI grids.  
These results indicate that a fixed cache configuration is suboptimal under variable CIs.

\begin{takeawaybox}{box:ci-carbon}{}
Carbon emissions depend on carbon intensities and cache sizes. A higher carbon intensity can lead to lower carbon emissions with a larger cache, as the cache saves operational carbon due to computation reduction. 
When the carbon intensity is low, a large cache can increase the carbon emissions due to its embodied carbon.
Because the carbon intensity is highly dynamic, both across grids and within a grid, an adaptive method is needed to achieve optimal carbon savings. 
\end{takeawaybox}

\section{High-level Ideas of \name{}}

\name{} is a carbon-aware caching framework for LLM serving by making tradeoffs between the operational carbon reduction from caching and its extra embodied carbon. Next, we introduce two main ideas that enable adaptive caching and guarantee SLO attainment behind it. 

\subsection{Carbon-aware Adaptive Caching}

\Cref{box:rate-carbon,box:ci-carbon} reveal that the carbon savings from caching depend on both request rate and carbon intensity.
These two factors affect the choice of cache size configuration dynamically, as they depend on the task, time, and the grid.
Therefore, the first challenge is to find the optimal cache size given the dynamic factors. 

\name{} resizes the cache according to the request rate and carbon intensity during runtime.
The high-level idea is to first profile the LLM serving system's performance and power under different combinations of request rates and carbon intensity levels. 
When serving prompts, \name{} can find the optimal configuration of cache size that leads to the lowest per-prompt carbon emissions.
However, one challenge is that cache resizing, especially enlarging the cache, takes time to warm up, and thus the performance highly depends on the future conditions. 
To overcome this challenge, \name{} predicts the request rate and carbon intensity based on historical data, enabling optimal decision-making that considers not only the current condition but also future trends. 
\Cref{fig:adaptive_mode}a illustrates an example where \name{} adaptively increases the cache size when the predicted
load (\ie request rate) increases to better reduce the operational carbon. 
\Cref{fig:adaptive_mode}b shows another example where \name{} adaptively shrinks the cache size when the carbon intensity is predicted to reduce, to save the embodied carbon of cache storage.

\begin{figure}
\centering
\begin{minipage}[b]{0.28\linewidth}
    \centering
    \includegraphics[width=1\linewidth]{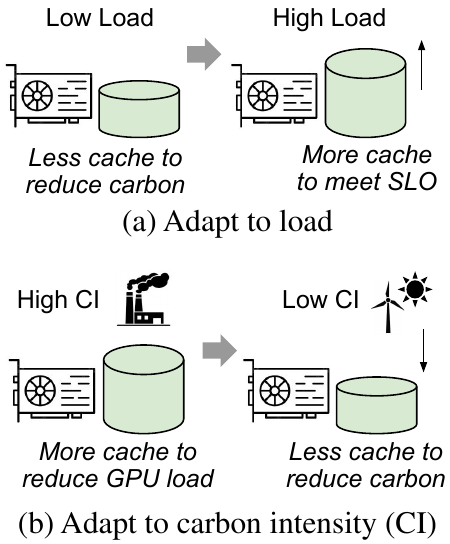}
    \caption{Adaptive caching.}
    \label{fig:adaptive_mode}
    \Description{}
\end{minipage}
\hfill
\begin{minipage}[b]{0.6\linewidth}
    \centering
    \includegraphics[width=1\linewidth]{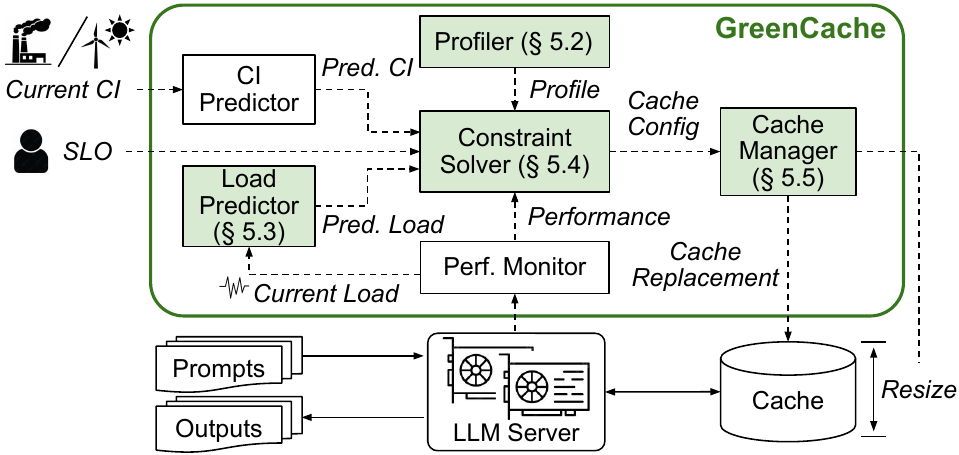}
    \caption{System overview of \name{}. Components in green are newly designed by this work.}
    \label{fig:overview}
    \Description{}
\end{minipage}

\end{figure}

\subsection{Performance SLO Attainment}
Only optimizing for carbon emissions can also negatively impact performance. 
As \Cref{box:size-perf} has shown, a small cache can lead to the minimum carbon emissions under low carbon intensity, but significantly degrade performance.
In LLM serving systems, Service Level Objectives (SLOs) typically define latency constraints for TTFT and TPOT, with the goal of ensuring that a high percentage of requests (e.g., 90\,\%) adhere to these thresholds.
Cache configurations that fail to achieve the required SLO attainment are not acceptable, even if they minimize carbon emissions.
Thus, the second challenge lies in minimizing carbon emissions while ensuring SLO attainment.

\name{} uses a constraint solver to find such a cache configuration.
We formulate the constrained carbon emission optimization as an Integer Linear Programming (ILP) problem.
The solver has an objective function that minimizes the carbon emissions based on the profile discussed in the first high-level idea.
At the same time, it enforces constraints to ensure SLO attainment.
In scenarios where a small cache is preferred to minimize carbon but can miss most SLOs, the solver guides \name{} to choose a larger cache that achieves targeted SLO compliance.

\section{\name{} Framework}

We first present an overview of \name{} and then describe its components.

\subsection{System Overview} \label{subsec:overview}

We design and implement \name{}, a carbon-aware caching framework for LLM serving.
\Cref{fig:overview} presents the system overview of \name{}. \name{} consists of six major components. 
First, a \emph{profiler} periodically analyzes the LLM task to model the relationship between cache size, load, performance, and power consumption (details in \Cref{subsec:profiler}). 
Second, a \emph{performance monitor} tracks the TTFT and TPOT of LLM serving and the current load (\ie request rate). 
Third, a combination of \emph{CI predictor} and \emph{load predictor} forecasts the CI and request rate based on historical values.
Specifically, we use the state-of-the-art CI predictor, EnsembleCI, \cite{yan2025ensembleci} for CI prediction (methodology details in \Cref{subsec:methodology}), and design a load predictor (details in \Cref{subsec:load_pred}).
A \emph{constraint solver} takes the SLO, predicted CI, performance, and power profile, predicted load, and current performance as input, and identifies the cache configuration that minimizes carbon emissions while attaining SLO (details in \Cref{subsec:ilp_solver}).
Finally, a \emph{cache manager} resizes the cache according to the cache configuration. It also incorporates a carbon-aware replacement policy that replaces cache entries with the Least Carbon Savings (\cachename{}) (details in \Cref{subsec:cache_controller}). 
We build \name{} on top of the LLM caching system, LMCache~\cite{lmcache}. 
\name{} can be flexibly adapted to different LLMs.

We next introduce the new components in the \name{} framework, marked green in \Cref{fig:overview}.

\subsection{Cache Performance Profiler}\label{subsec:profiler}

The cache performance profiler analyzes performance and power under different request rates and cache sizes for each LLM task.
The profiler samples a number of prompts and evaluates them on an initialized cache filled to maximum capacity.
The profiler consists of two components for evaluation, a monitoring tool and a carbon emission calculator, to profile each LLM task. 

\textbf{Performance and Power Analysis.}
The profiler sweeps cache sizes and request rates.
Cache size values are defined by the cache configurations, while request rates are varied up to the maximum level the system can support before violating SLOs.
For each combination of cache size and request rate, it records the TTFT and TPOT of every prompt. 
In parallel, it records the power consumption of key server components. 
The CPU power is measured using a RAPL tool \cite{ryzen-rapl} and GPU power is measured using pyNVML \cite{pynvml} every 1 ms.
For other relatively low-power components, like SSD and DRAM, that do not provide direct measurement interfaces, the monitoring tool follows the typical power in their specifications \cite{990pro,micronddr4}. 
In real-world LLM serving scenarios, profiling is performed periodically to adapt to dynamic changes in workload characteristics, such as distribution of request length and access patterns, as seen in prior work \cite{alpaserve}.

\textbf{Carbon Calculation.}
The carbon calculator follows the approach in \Cref{subsec:carbon-background,subsubsec:cache-carbon-modeling}. 
It first computes the total energy consumption using the timing and latency measurements and then derives the operational carbon based on the grid’s current carbon intensity, as defined in \Cref{eq:carbon-op}.
Next, it estimates the embodied carbon of the platform using \Cref{eq:carbon-eb}. Specifically, it calculates the embodied carbon of the SSD for caching via \Cref{eq:ssd-embodied}.
Finally, the total carbon emissions are obtained by summing the operational and embodied components, following \Cref{eq:cache+other}.

\begin{figure}
\centering
\begin{subfigure}[t]{0.48\linewidth}
    \centering
    \includegraphics[width=\linewidth]{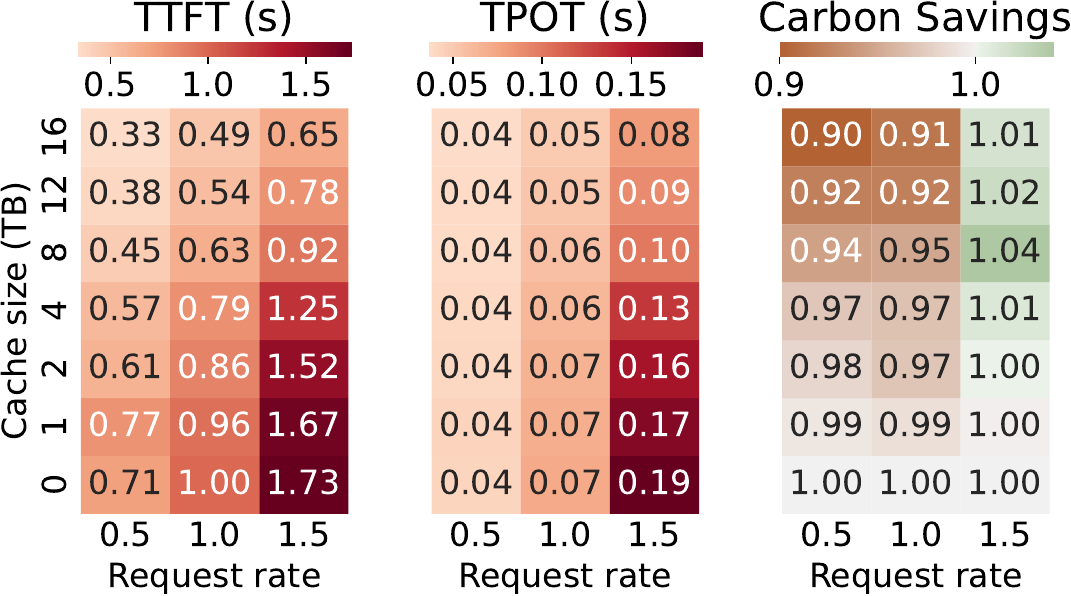}
    \caption{} \label{fig:70B_multiturn_heatmap} \Description{}
\end{subfigure}
\hfill
\begin{subfigure}[t]{0.48\linewidth}
    \centering
    \includegraphics[width=\linewidth]{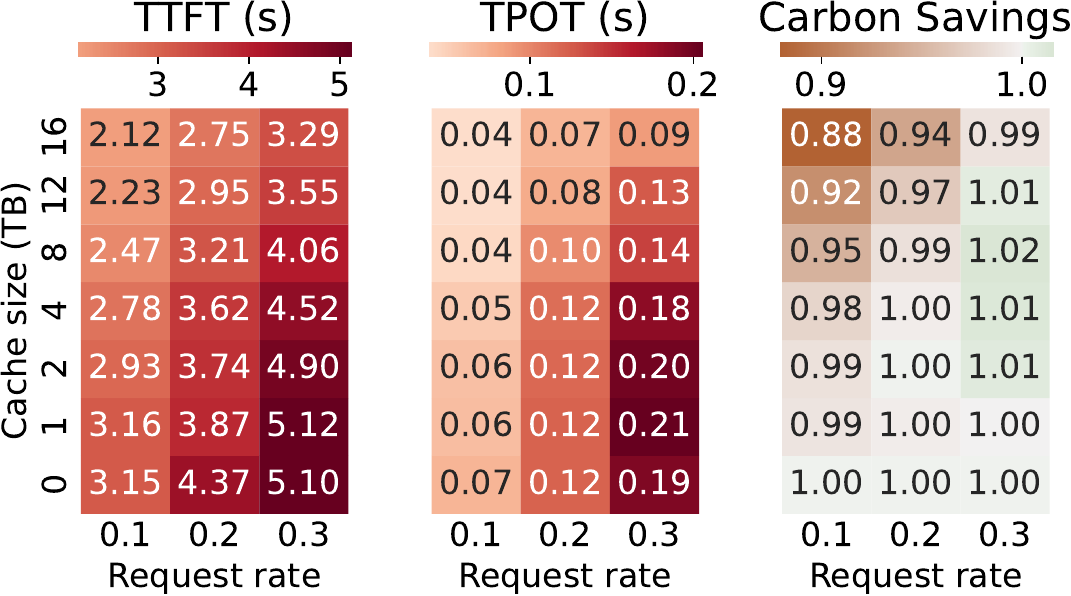}
    \caption{}  \label{fig:70B_QA_0.4_heatmap}  \Description{}
\end{subfigure}
\caption{Profiling results of TTFT and TPOT (lower is better), and carbon savings over no-cache (higher is better) when serving Llama-3 70B in the ES grid for (a) multi-turn conversation, using a ShareGPT dataset~\cite{sharegpt_dataset} and (b) document comprehension using a TriviaQA dataset \cite{joshi2017triviaqa} (with skewness of $\alpha=0.4$).} 
\Description{}
\end{figure}

In this work, we adapt two LLM tasks to \name{}, multi-turn conversation and document reading comprehension, which have also been studied in prior works \cite{cachegen,distserve,cachedattention,NEURIPS2024_LiuAdvances,chen2025broaden,Luo_2024_CVPR,Du_Nan_Zhang_Xie_Xu_Fan_Cui_Tao_Jiang_2024,jeongasplos25accelerating}.
For each task, we sample 500 prompts from the dataset and execute them after cache warmup. 
We demonstrate the profiling results of both tasks (ShareGPT dataset \cite{sharegpt} for multi-turn conversation and TriviaQA \cite{joshi2017triviaqa} for document reading comprehension) using a Llama-3 70B model \cite{llama3} on the platform described in \Cref{tab:embodied}.
As the profile targets \name{}, the cache uses our carbon-aware policy (details in \Cref{subsec:cache_controller}).
More experiment methodology details are described in \Cref{subsec:methodology}.

\Cref{fig:70B_multiturn_heatmap} shows heatmaps of the average TTFT and TPOT, and carbon savings from caching, under variable request rates (x-axis) and cache sizes (y-axis), for the multi-turn conversation task.
We profile TTFT and TPOT under various cache sizes (y-axis).
We also profile the carbon emission savings based on the CI of the ES grid --- defined as the ratio between the no-cache and cached configurations --- where a ratio greater than 1 indicates benefits.
The TTFT and TPOT in the profile are consistent with \Cref{box:rate-perf,box:size-perf}.
Overall, larger caches improve the latency, and the benefit is more prominent for higher request rates.  
The trend of carbon savings is also aligned with \Cref{box:rate-carbon}.
Larger caches may not bring carbon emission savings, especially when the request rate is low. 
When the rate is high, larger caches yield more savings.
\Cref{fig:70B_QA_0.4_heatmap} shows the heatmaps of the document reading comprehension task with skewness of $\alpha=0.4$ (\ie 10\,\% of documents are accessed by \textasciitilde 25\,\% of prompts).
The latency and carbon savings trends are consistent with the multi-turn conversation task. However, the longer document contexts result in a higher TTFT, which lowers the maximum request rate in this profile. 
These profiles will be used by a constraint solver (\Cref{subsec:ilp_solver}) to select the optimal cache size configuration.

\subsection{Load Predictor} \label{subsec:load_pred}

Prior work from Microsoft has shown that LLM prompts follow a similar pattern in a day \cite{stojkovic2025dynamollm}, where the rate mostly depends on the time of day.
We design a lightweight predictor, based on the Seasonal Autoregressive Integrated Moving Average (SARIMA) model, using \emph{pmdarima}~\cite{pmdarima}.
It captures both daily periodicity and short-term autocorrelation in the given load. 
The SARIMA parameters are estimated from historical rate traces using maximum likelihood fitting. We also use \emph{pmdarima} for parameter auto-tuning.
We use a hold-out evaluation, where it takes the most recent three consecutive days of data as input and predicts 24 hours ahead. 
Moreover, the load predictor performs online step-ahead prediction during runtime --- every hour, the model incorporates the most recent load before forecasting future hours to adapt to rate fluctuations. 
This predictor incurs minimum computation and runs on separate CPU cores, without interfering with the LLM workload.

\subsection{Constraint Solver}\label{subsec:ilp_solver}

\begin{table}[t]
\small
\centering
\setlength{\tabcolsep}{2.5pt}
\caption{Key notations in the ILP problem. }\label{tab:notation}
% \vspace{-1em}
\begin{tabular}{ll ll}
\cmidrule[1pt](r{0.5em}){1-2} \cmidrule[1pt](l{0.5em}){3-4}
$t$ & Timestamp during the LLM execution & 
$p_{i}^{j_t}$  & Average power of request $i$ at request rate $j_t$\\ 
$\texttt{CI}_{t}$ & Carbon intensity (CI) at time $t$  & ${\rm TTFT}_{i}^{j_t}$  & TTFT for request $i$ at request rate $j_t$  \\
$j_{t}$ & Request rate at time $t$ & $S_{i}^{j_{t}}$   & Cache size of request $i$ at request rate $j_{t}$   \\
$N$ & Total number of requests in an LLM service & $\texttt{LT}_{\rm comp}$  & Lifetime of $\rm comp\in\{ {\rm SSD}, {\rm GPU}, {\rm CPU}, {\rm Mem} \}$ \\
\cmidrule[1pt](r{0.5em}){1-2} \cmidrule[1pt](l{0.5em}){3-4}
\end{tabular}

\end{table}

We first describe the formulation of the optimization problem. 
Then, we discuss our assumptions and potential error sources that can affect the decisions. 
Finally, we describe the solver's implementation.

\subsubsection{Problem Formulation}

\name{} minimizes total carbon emissions by selecting the optimal cache size while meeting the SLO attainment goal. 
The total carbon emissions include both operational carbon and embodied carbon emissions, where embodied emissions include storage (SSD) and non-storage (GPU, CPU, and memory). 

% The carbon optimization focuses on the prefill phase.
% This is because cache hits directly reduce computation during the prefill phase by eliminating redundant processing for cached contexts.
The carbon optimization focuses on the prefill phase, where cache hits directly reduce computation by eliminating redundant processing for existing contexts.
While cache size does not directly reduce auto-regressive token generation computations in the decode phase, it effectively reduces decode latency by accelerating the prefill phase, mitigating decode delays for the concurrently scheduled requests within a continuous batching system.
For this reason, as discussed in \Cref{subsec:caching-background}, caching influences only the waiting time rather than the total computation of the decode phase. 
% Therefore, the carbon optimization focuses on the prefill phase.

We formulate this as an Integer Linear Programming (ILP) problem. The key notations are summarized in Table~\ref{tab:notation}.
An LLM service handles a total of $N$ requests.
At each time $t$, the LLM serving system experiences a request rate of $j_t$. 
Each request $i$ has power consumption $p_{i}^{j_t}$, time-to-first-token latency ${\rm TTFT}_{i}^{j_t}$, where the request rate of a future time $t$ is predicted using our load predictor. 
Given the predicted carbon intensity $\texttt{CI}_t$ of a future time $t$, the operational carbon emissions for the prefill phase can be calculated by multiplying power, TTFT, and CI. 
Because of adaptive caching, the calculation of SSD embodied carbon emissions is different from other hardware components, as shown in \Cref{eq:ssd-embodied}.
It depends on both the cache size $S_{i}^{j_{t}}$ and execution time $\rm TTFT_{i}^{j_{t}}$.
We calculate the embodied carbon of the SSD capacity allocated to the LLM service.

Then, given the execution time, we calculate the fraction of time a prompt utilizes the SSD's lifetime.
\Cref{eq:ilp} summarizes the objective function and constraints for the ILP problem. For an LLM service with $N$ requests in total with request rate $j_t$ at time $t$, \name{} minimizes total carbon emissions while meeting TTFT and TPOT SLOs as follows:
\begin{align}\label{eq:ilp}
\argmin_{S_{i}^{j_{t}}} & \sum_{i=1}^N \bigg(  \overbrace{p_{i}^{j_t} {\rm TTFT}_{i}^{j_t}\texttt{CI}_t}^\textbf{Operational carbon}  + \overbrace{\frac{{\rm TTFT}_{i}^{j_t}}{\texttt{LT}_{SSD}} S_{i}^{j_{t}} C_{\rm e, SSD}^{\rm Unit}}^\textbf{Cache embodied carbon} +  \overbrace{\sum_{{\rm comp}\in{{\rm GPU}, {\rm CPU}, {\rm Mem}}}\frac{{\rm TTFT}_{i}^{j_t}}{\texttt{LT}_{\rm comp}}C_{\rm e, comp}}^\textbf{Other embodied carbon} \bigg) \nonumber \\
s.t. \quad & \sum_{i=1}^N z_{\rm TTFT, i} \geq \rho N \bigwedge \sum_{i=1}^N z_{\rm TPOT, i} \geq \rho N~,
\end{align}
where $z_{\rm TTFT, i}, z_{\rm TPOT, i}\!\!\in\!\!\{0,1\}$ are binary variables such that $z_{TTFT, i}\!\!=\!\!1$ if a request $i$ meets the TTFT constraint, and $z_{\rm TPOT, i}\!\!=\!\!1$ means request $i$ meets the TPOT constraint.
$\rho$ specifies the required fraction of prompts that meet the requirements.
We set it as 0.9, meaning at least 90\,\% of requests must satisfy both TTFT and TPOT latency requirements, corresponding to the targeted SLO attainment.

\subsubsection{Assumptions and Error Analysis}

To formulate the optimization problem efficiently, we make three key assumptions. 
First, we constrain the cache size variable $S_t$ to a discrete set of integers, which aligns with the granularity of cache sizes in the cloud. This constraint disallows the selection of fine-grained intermediate values.
Second, we assume that the carbon intensity ($\texttt{CI}_t$) remains constant within each decision interval (1 hour due to the granularity of the CI dataset \cite{maji2022carboncast}).
Third, we assume that the profiled metrics (e.g., TTFT) reflect the system's performance under stable conditions.
In particular, we collect the offline profiling data after a cache warm-up period using our LCS replacement policy. Therefore, the dependencies between cache sizes, cache hit rate, and the resultant latency are implicitly captured within the profiled results.
Satisfying the SLO constraint requires exploring a combinatorial search space, scaling as $O(2^T)$ in the worst case, where $T$ denotes the count of discrete optimization points defined by the granularity of timestamps. 
We further prove that this optimization problem is NP-hard via a reduction from the \textsc{0--1 Knapsack} problem in \Cref{sec:ilp_comp}.
However, the specific problem in \name{} is computationally tractable due to the constrained decision space. Unlike the general Knapsack problem, our formulation involves a limited set of discrete cache size candidates (1 TB granularity) and a finite time interval. 
These practical assumptions effectively prune the search space. 
We accept the resulting ``rounding loss'' as a necessary tradeoff to maintain the computational efficiency required for online decision-making.

In addition, we acknowledge three types of errors affecting theoretical optimality.
First, the load predictor may introduce errors compared to the ideal outcomes. 
Second, the CI predictor can also introduce errors.
Third, profiling error arises from discrepancies between the profiling phase and actual execution. 
We evaluate these errors in \Cref{subsec:error_analysis} and demonstrate that prediction and profiling errors have a low impact on total carbon savings.

\subsubsection{Solver Implementation}

We use the PuLP optimization modeling library \cite{ilpsolver} with the COIN-OR CBC solver \cite{cbc} as the backend to solve the ILP objective function of \Cref{eq:ilp}.
\name{} takes the optimal cache configuration determined by the ILP solver and performs cache resizing every hour.
The cache has an allocation granularity of 1 TB, with a maximum size of 16 TB. 
We evaluate the ILP execution time in \Cref{subsec:ilp_overhead} --- 7.03~s per decision on average, a low overhead compared to the hourly cache resizing frequency.

\subsection{Cache Manager and Replacement Policy}
\label{subsec:cache_controller}

The cache controller takes the cache size configuration from the constraint solver and manages the cache. 
To enlarge the cache, the cache controller allocates more SSD space from the cloud. 
When the cache shrinks, the cache controller evicts existing cache entries with the lowest scores until the total size reaches the configuration. Then, the spare cache space will be released.

The replacement policy determines the score of cache entries.
To optimize carbon efficiency, \name{} introduces \textit{Least Carbon Savings} (\cachename{}), a new carbon-aware cache replacement policy that prioritizes entries based on their potential carbon impact. 
Unlike conventional cache replacement policies that focus on either access recency (e.g., LRU) or frequency (e.g., LFU), \cachename{} minimizes carbon emissions by evicting cache entries that offer the least carbon savings.
As discussed in~\Cref{subsec:ilp_solver}, carbon emissions from caching depend on both cache size and hit rate. We design a scoring function guided by four key insights:
\begin{enumerate}[leftmargin=*,label={(\roman{enumi})}]
    \item Prioritize entries with more hit tokens, which yield higher operational carbon savings, as shown in \Cref{box:length-perf}. 
    \item Favor frequently accessed entries, as they are more likely to produce cache hits and contribute to greater carbon savings.
    \item Prefer smaller entries that consume less storage, reducing embodied carbon from SSD usage. Although a longer context reduces more operational carbon as concluded in \Cref{box:length-perf}, it costs more embodied carbon, which introduces a tradeoff. 
    \item Consider recency of access, as recently used entries are more likely to be accessed again soon. 
\end{enumerate}
Together, we design the \cachename{} score as follows
\begin{align}\label{eq:overall-score}
\mathit{Score} = \frac{ \#\mathit{Token} \times \#\mathit{Hit}}{\mathit{Size}\times \mathit{Age}}~.
\end{align}
$\#\mathit{Token}$ is the accumulated hit token number of this cache, representing the volume of reused context. A higher number indicates more operational carbon savings, supporting \textit{Insight~(i)}. $\#\mathit{Hit}$ is the number of cache hits. Frequent access increases the chance of carbon savings by avoiding recomputation, supporting \textit{Insight~(ii)}. $\mathit{Size}$ is the cache size of the entry. Dividing by cache size encourages keeping smaller entries that consume less embodied carbon, supporting \textit{Insight~(iii)}. $\mathit{Age}$ is how long the cache has stayed in storage, where older entries are less likely to be reused. Penalizing age promotes the eviction of stale entries, supporting \textit{Insight~(iv)}.

As described in \Cref{subsec:profiler}, we evaluate two LLM tasks. 
Overall, $\mathit{Size}$ and $\mathit{Age}$ in \Cref{eq:overall-score} work in the same way for both tasks, but we specifically adapt $\#\mathit{Token}$ and $\#\mathit{Hit}$ fields to each task.

\noindent\textbf{Task 1: Multi-turn conversation.} 
Each turn reuses the KV cache of context from previous turns. 
We adapt \Cref{eq:overall-score} to prioritize cache entries that contribute more to carbon savings: 
\begin{align}\label{eq:multiturn}
\mathit{Score} = \frac{\mathit{CurTurn} \times \#\mathit{AccuToken}}{\mathit{Size} \times \mathit{Age}}~,  
\end{align}
where $\mathit{CurTurn}$ encourages retaining cache entries deeper in the conversation, since later turns rely more heavily on prior context. $\#\mathit{AccuToken}$ reflects the total reused tokens across turns, directly tied to operational carbon savings. 

\noindent\textbf{Task 2: Document reading comprehension.}
A document is reused across multiple questions, similar to turns in a dialogue.
We treat these questions as equivalent to conversation turns and adapt the score accordingly:
\begin{align}\label{eq:document}
\mathit{Score}=\frac{\#\mathit{Hit} \times \mathit{AccuDocLen}}{\mathit{Size} \times \mathit{Age}}~,
\end{align}
where $\#\mathit{Hit}$ tracks the number of times a document is reused across questions, and $\mathit{AccuDocLen}$ measures the total reused document length, capturing the operational carbon savings. 
\section{Evaluation}

\subsection{Methodology} \label{subsec:methodology}

\textbf{Experiment setup.}
We evaluate two models, Llama-3 70B and 8B \cite{llama3}. The 70B model runs on $4\times$ NVIDIA L40 GPUs using the platform in \Cref{tab:embodied}, and the 8B model reduces the GPUs to $2\times$ L40 as the model is less demanding. 
Due to GPU memory constraints, the 70B model runs under INT8. The 8B model runs under the default BF16.
Both models have a context window of 8k tokens. When the context goes beyond this limit, we truncate extra context like prior work~\cite{cachedattention,cachegen}.
The platform maintains a maximum capacity of 16 TB of SSD. \name{} provisions SSD at a granularity of 1 TB to the LLM caching system. 
The power measurement method follows \Cref{subsec:profiler}. 
The caching system is built on top of LMCache \cite{lmcache}, which includes vLLM~\cite{pagedattention} and continuous batching~\cite{orca} optimizations. 
We also integrate our carbon-aware \cachename{} replacement policy into LMCache.  
We use a maximum cache size of 16 TB for the 70B model and 8 TB for the 8B model. 
By default, \name{} resizes the cache every hour. \Cref{subsubsec:switch} evaluates different resizing frequencies. 

\textbf{Tasks and Datasets.}
We evaluate two datasets that correspond to the two LLM tasks commonly used by prior works~\cite{cachegen,distserve,cachedattention,NEURIPS2024_LiuAdvances,chen2025broaden,Luo_2024_CVPR,Du_Nan_Zhang_Xie_Xu_Fan_Cui_Tao_Jiang_2024,jeongasplos25accelerating}. In both datasets, we use distinct sets of prompts for profiling and evaluation.

\begin{itemize}[leftmargin=*]
    \item \textbf{Multi-turn conversation based on ShareGPT \cite{sharegpt,sharegpt_dataset}.} 
    We randomly select a conversation every time and take its next conversation turn as the input prompt. 
    The request follows a Poisson distribution like prior works~\cite{cachedattention,hcache,distserve}.  
    By varying \textalpha{}, the request generator simulates different average request rates. 
    We initialize the cache with 200k prompts. 
    \item \textbf{Document comprehension based on TriviaQA \cite{joshi2017triviaqa}.}
    Because TriviaQA is initially used for training, it has an almost uniform number of visits per document. Therefore, we introduce skewness by following Zipf distributions similar to prior studies on data caching \cite{berg2020cachelib,gill2007youtube,hasan2014tradeoffscdn,qi2013facebookphoto,breslau1999webcaching}. 
    We evaluate Zipf with low and high skewness levels:  \textalpha{}=0.4 (10\,\% of documents are accessed by \textasciitilde 25\,\% of prompts) and \textalpha{}=0.7 (10\,\% of documents are accessed by \textasciitilde 50\,\% of prompts).
    Prompts are also generated under Poisson distributions.
    As the context is larger than multi-turn conversations, we initialize the cache with 50k prompts.
\end{itemize}

\textbf{Request rate.}
Prior work has shown that the request rate to LLM services varies in a day \cite{patel2024splitwise,stojkovic2025dynamollm}. To reflect realistic load dynamics, we use the Azure LLM trace \cite{azurellm2024} to simulate prompt arrivals at corresponding request rates. 
We downscale the request rate of the Azure trace to match our platform’s capacity, ensuring that the peak rate is still within our system's sustainable throughput.
The load predictor in \Cref{subsec:load_pred} forecasts future request rates using past and current rate data. 

\textbf{Carbon intensity.}
We evaluate four grids, FR, FI, ES, and CISO, and predict their CI using EnsembleCI \cite{yan2025ensembleci}, which is trained with 18 months of data and predicts CI for July 6, 2022.
The training data and groundtruth CI come from the CarbonCast dataset \cite{maji2022carboncast}. 
We use the CI on this date for the main performance and carbon evaluation, where we incorporate the variable CI traces with the variable request rate to simulate a realistic scenario. 
We also use the average CI of the ES grid for sensitivity and ablation studies.

\textbf{SLOs.}
We define SLOs for both TTFT and TPOT. 
For multi-turn conversations, we set TTFT and TPOT SLOs to 2.5~s and 0.2 s for the 70B model, and 0.5 s and 0.15 s for the 8B model. 
For document comprehension, where inputs are longer and latency is less time-critical, we relax the TTFT thresholds to 15 s for the 70B model and 2.5 s for the 8B model. 
These SLOs are aligned with prior work~\cite{distserve,agrawl2024sarathiserve} and benchmarking results from LLMPerf~\cite{llmperf}.
We focus on an SLO attainment of 90\,\%, requiring at least 90\,\% requests to meet the timing constraints.

\textbf{Comparison points.}
We evaluate the following system design points: 
    (1) \textbf{No Cache}: Non-caching baseline with vLLM and continuous batching.
    (2) \textbf{Full Cache}: Use the maximum cache sizes as described in the experiment setup.
    (3) \textbf{\name{}}: The carbon-aware caching system in this work. The maximum cache size is the same as Full Cache.

\subsection{Carbon Emission and SLO Attainment} \label{subsec:carbon_slo}

\begin{figure}[t]
\begin{subfigure}[b]{1\linewidth}
    \centering
    \begin{subfigure}[b]{1\linewidth}
    \centering
    \includegraphics[width=1\linewidth]{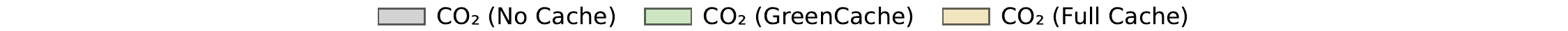}
    \end{subfigure}
    % \vspace{1em}
    \begin{subfigure}[b]{0.32\linewidth}
    \centering
    \includegraphics[width=1\linewidth]{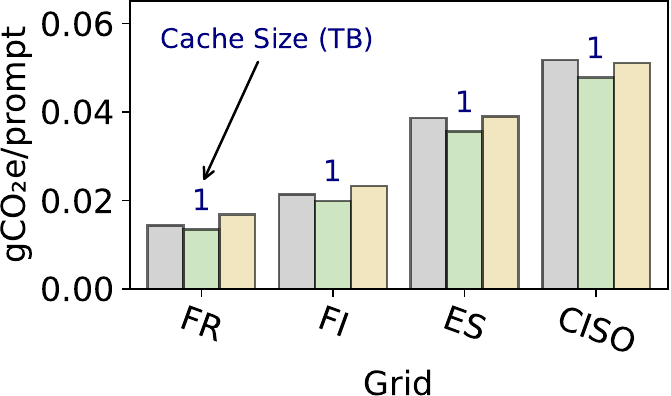}
    \caption*{Multi-turn conversation.}
    \end{subfigure}
    \hspace{1mm}
    \begin{subfigure}[b]{0.32\linewidth}
    \includegraphics[width=1\linewidth]{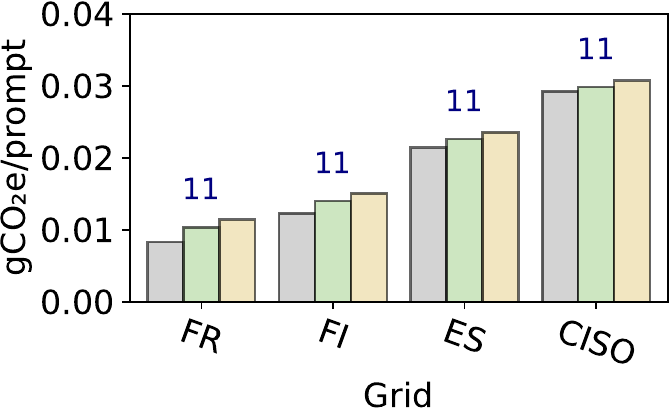}
    \caption*{Document comp. ($\alpha=0.4$).}
    \end{subfigure}
    \hspace{1mm}
    \begin{subfigure}[b]{0.32\linewidth}
    \includegraphics[width=1\linewidth]{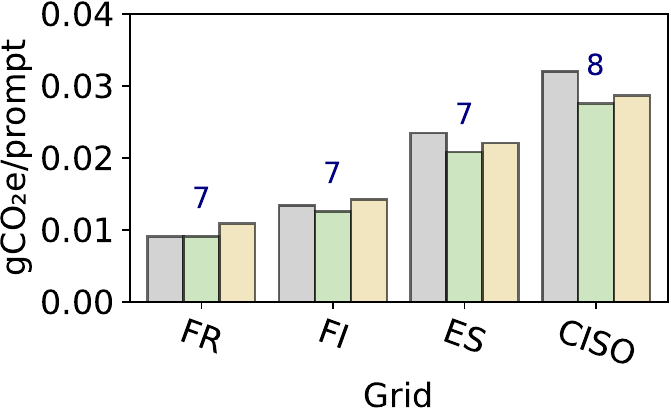}
    \caption*{Document comp. ($\alpha=0.7$).}
    \end{subfigure}
    \caption{Carbon emissions of Llama-3 70B.}
    \label{fig:70B_aggregated}
    \Description{}
\end{subfigure}
\begin{subfigure}[t]{1\linewidth}
    \centering
    \begin{subfigure}[b]{0.32\linewidth}
    \centering
    \includegraphics[width=1\linewidth]{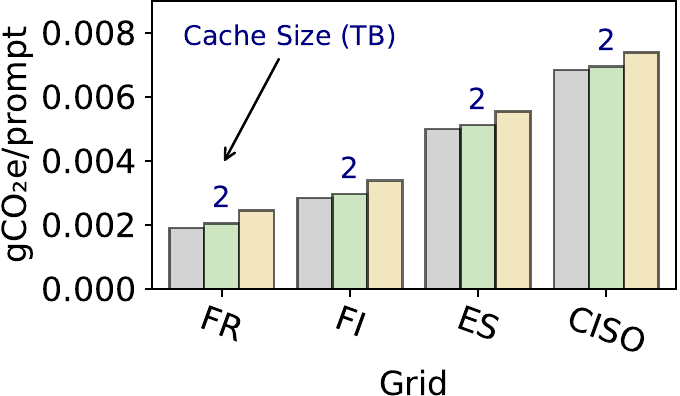}
    \caption*{Multi-turn conversation.}
    \end{subfigure}
    \hspace{1mm}
    \begin{subfigure}[b]{0.32\linewidth}
    \includegraphics[width=1\linewidth]{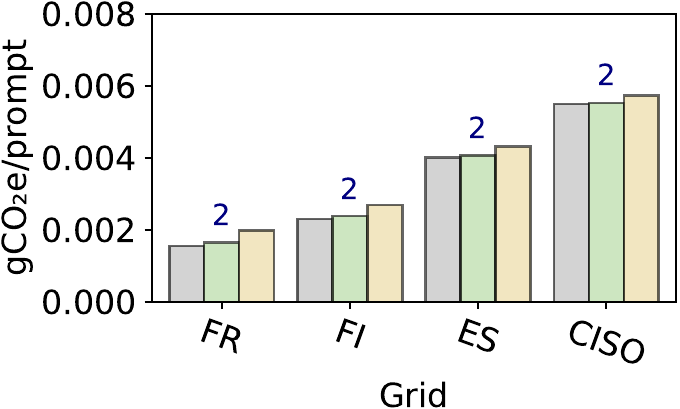}
    \caption*{Document comp. ($\alpha=0.4$).}
    \end{subfigure}
    \hspace{1mm}
    \begin{subfigure}[b]{0.32\linewidth}
    \includegraphics[width=1\linewidth]{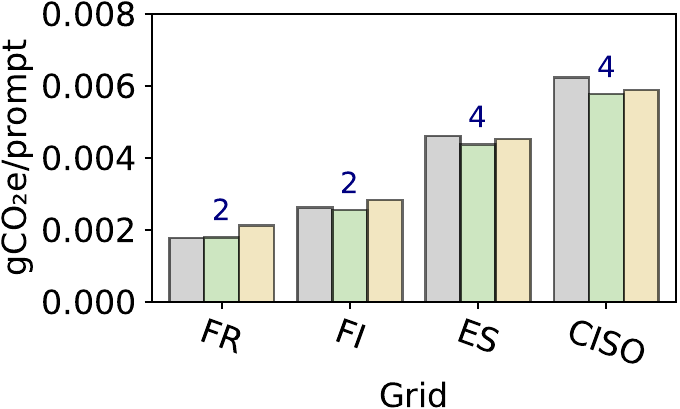}
    \caption*{Document comp. ($\alpha=0.7$).}
    \end{subfigure}
    \caption{Carbon emissions of Llama-3 8B.}
    \label{fig:8B_aggregated}
    \Description{}
\end{subfigure}
\caption{Average carbon emissions of LLM tasks. }
\end{figure}

\begin{figure}
\begin{subfigure}[t]{1\linewidth}
    \centering
    \begin{subfigure}[b]{1\linewidth}
    \centering
    \includegraphics[width=1\linewidth]{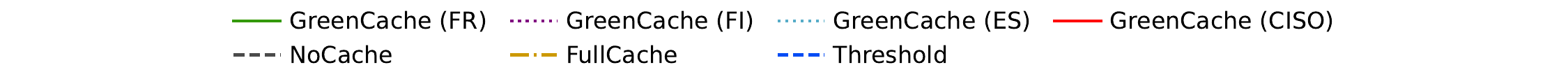}
    \end{subfigure}
    % \vspace{1em}
    \begin{subfigure}[b]{0.32\linewidth}
    \centering
    \includegraphics[width=1\linewidth]{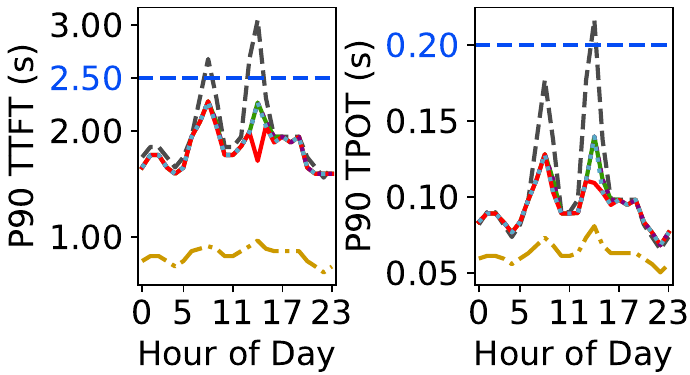}
    \caption*{Multi-turn conversation.}
    \end{subfigure}
    \hfill
    \begin{subfigure}[b]{0.32\linewidth}
    \includegraphics[width=1\linewidth]{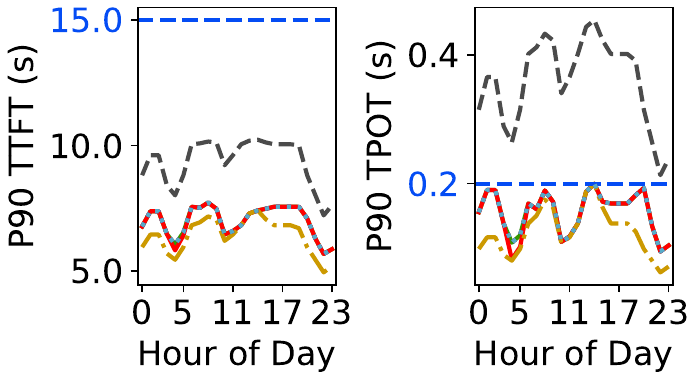}
    \caption*{Document comp. ($\alpha=0.4$).}
    \end{subfigure}
    \hfill
    \begin{subfigure}[b]{0.32\linewidth}
    \includegraphics[width=1\linewidth]{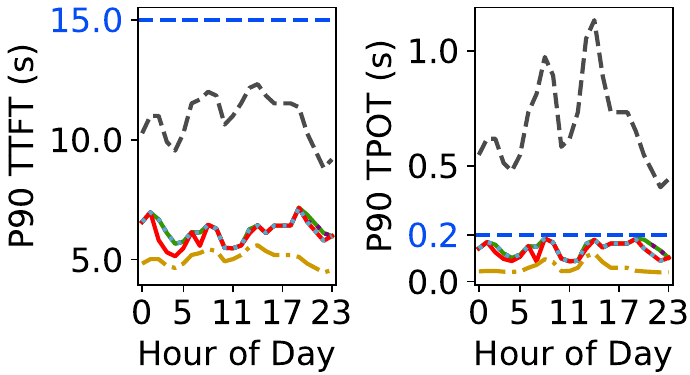}
    \caption*{Document comp. ($\alpha=0.7$).}
    \end{subfigure}
    \caption{Llama-3 70B.}
    \label{fig:70B_SLO}
    \Description{}
\end{subfigure}
\begin{subfigure}[t]{1\linewidth}
    \centering
    \begin{subfigure}[b]{0.32\linewidth}
    \centering
    \includegraphics[width=1\linewidth]{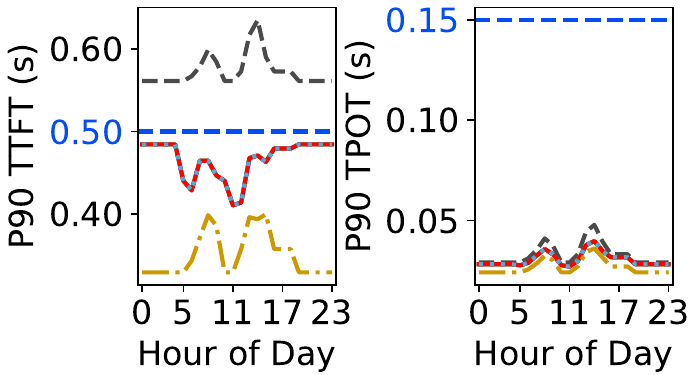}
    \caption*{Multi-turn conversation.}
    \end{subfigure}
    \hfill
    \begin{subfigure}[b]{0.32\linewidth}
    \includegraphics[width=1\linewidth]{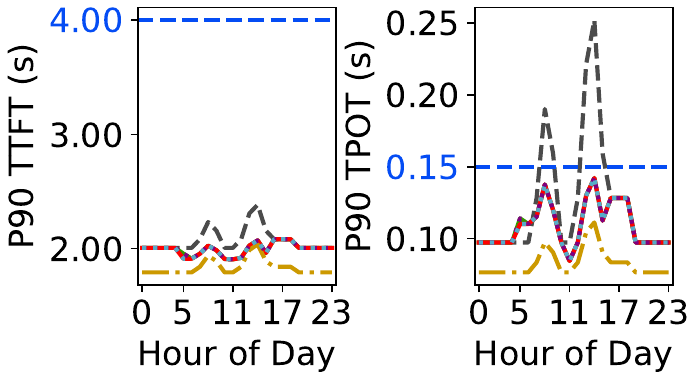}
    \caption*{Document comp. ($\alpha=0.4$).}
    \end{subfigure}
    \hfill
    \begin{subfigure}[b]{0.32\linewidth}
    \includegraphics[width=1\linewidth]{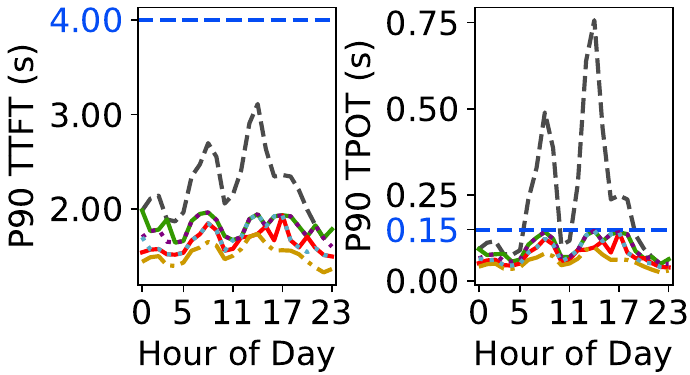}
    \caption*{Document comp. ($\alpha=0.7$).}
    \end{subfigure}
    \caption{Llama-3 8B.}
    \label{fig:8B_SLO}
    \Description{}
\end{subfigure}
\caption{SLO attainment timelines of LLM tasks. 
}
\end{figure}

We evaluate the carbon emissions and SLO attainment of \name{} using dynamic CI traces of four grids, where each grid follows an Azure request rate trace. 

\begin{figure}
\begin{subfigure}[t]{1\linewidth}
    \centering
    \begin{subfigure}[b]{1\linewidth}
    \centering
    \includegraphics[width=0.75\linewidth]{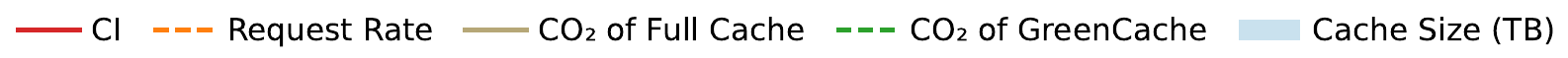}
    \end{subfigure}
    \begin{subfigure}[b]{0.48\linewidth}
    \includegraphics[width=1\linewidth]{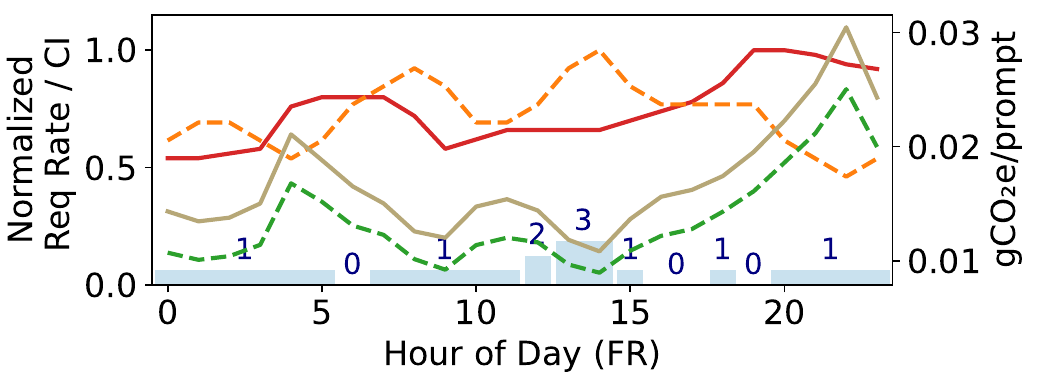}
    \end{subfigure}
    \hfill
    \begin{subfigure}[b]{0.48\linewidth}
    \includegraphics[width=1\linewidth]{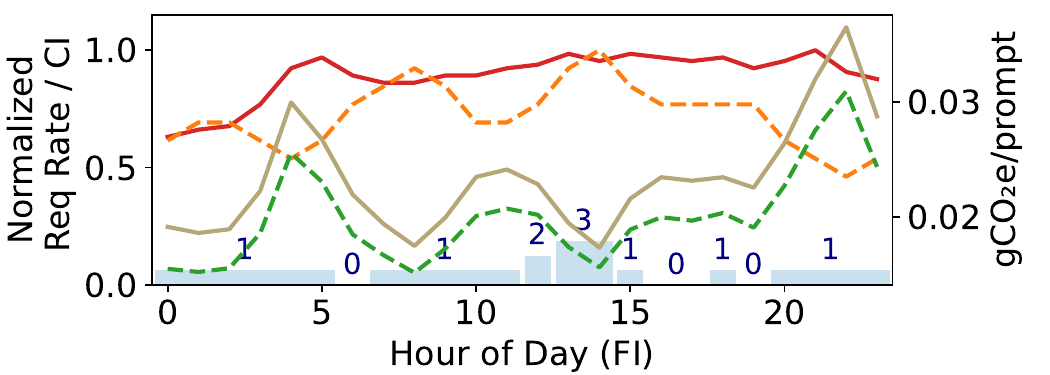}
    \end{subfigure}
    \begin{subfigure}[b]{0.48\linewidth}
    \includegraphics[width=1\linewidth]{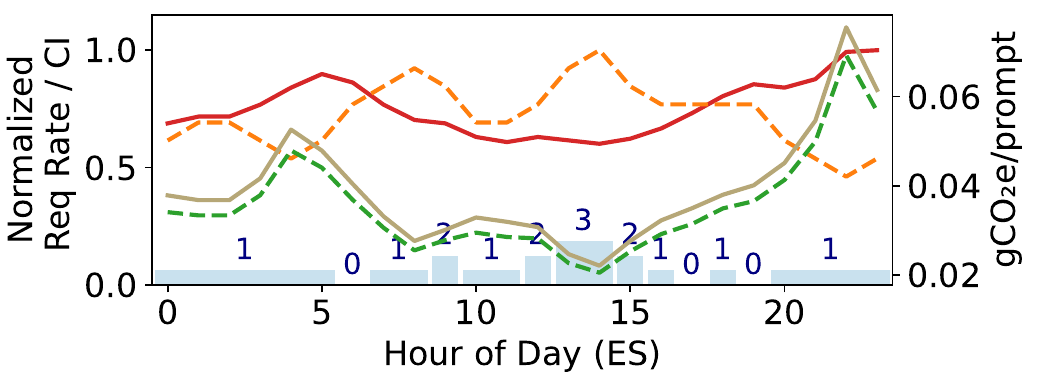}
    \end{subfigure}
    \hfill
    \begin{subfigure}[b]{0.48\linewidth}
    \includegraphics[width=1\linewidth]{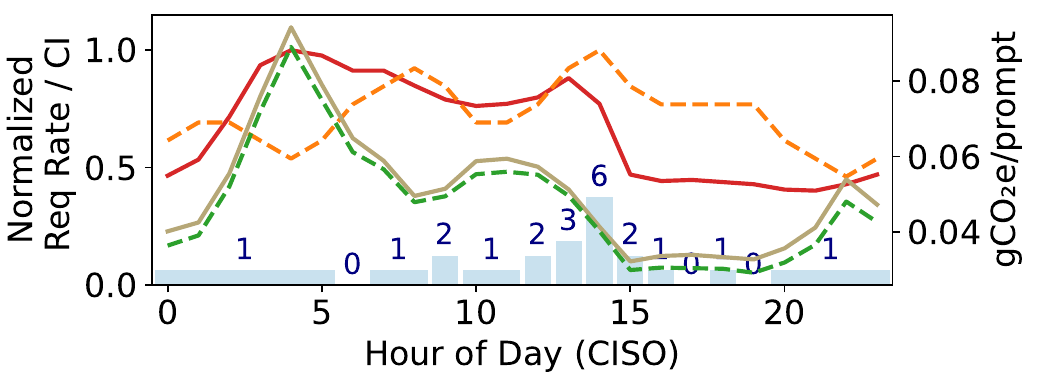}
    \end{subfigure}
    \caption{Multi-turn conversation with Llama-3 70B.}
    \label{fig:multiturn_70B}
    \Description{}
\end{subfigure}
\begin{subfigure}[t]{1\linewidth}
    \centering
    \begin{subfigure}[b]{0.48\linewidth}
    \includegraphics[width=1\linewidth]{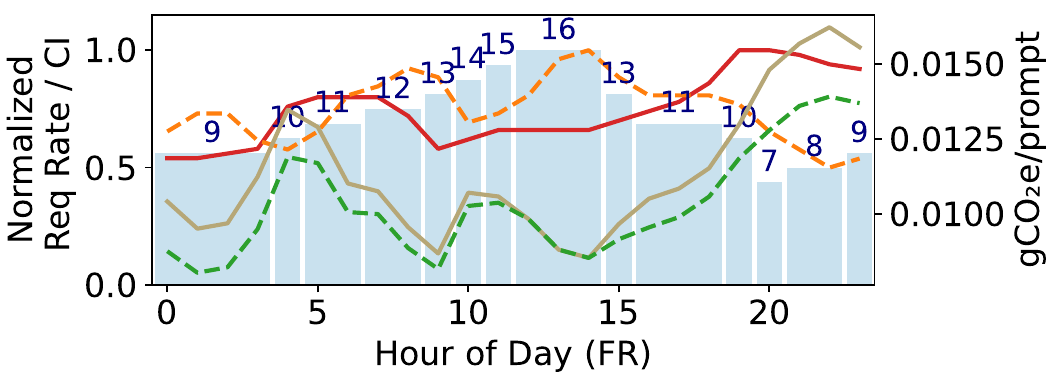}
    \end{subfigure}
    \hfill
    \begin{subfigure}[b]{0.48\linewidth}
    \includegraphics[width=1\linewidth]{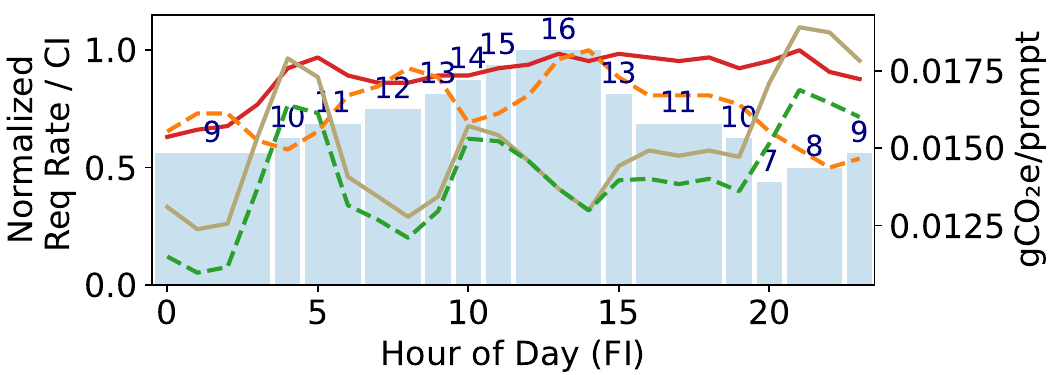}
    \end{subfigure}
    \begin{subfigure}[b]{0.48\linewidth}
    \includegraphics[width=1\linewidth]{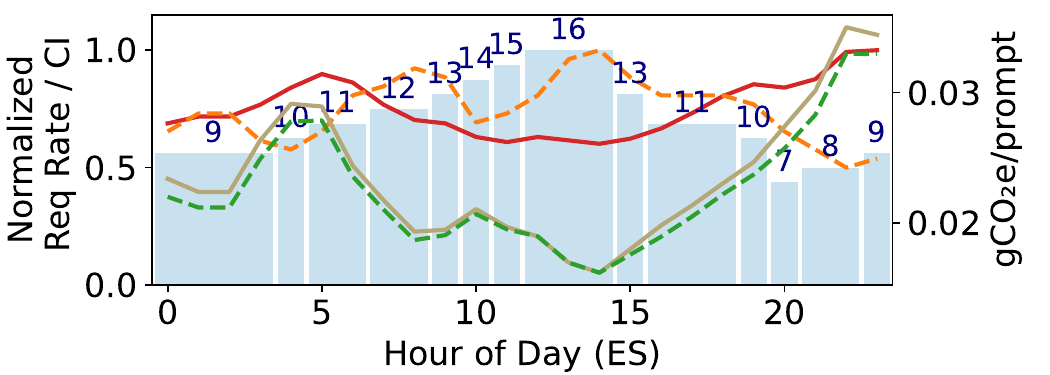}
    \end{subfigure}
    \hfill
    \begin{subfigure}[b]{0.48\linewidth}
    \includegraphics[width=1\linewidth]{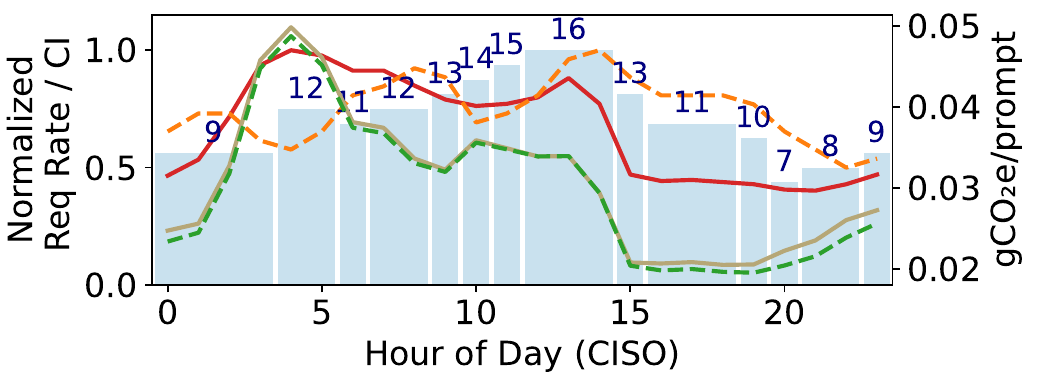}
    \end{subfigure}
    \caption{Document comprehension ($\alpha=0.4$) with Llama-3 70B.}
    \label{fig:qa_04_70B}
    \Description{}
\end{subfigure}
\caption{Timelines of carbon emissions under variable CI and rate. }
\end{figure}

\textbf{Carbon Emissions.}
First, we present the overall carbon emissions when serving Llama-3 70B and 8B models, as \Cref{fig:70B_aggregated,fig:8B_aggregated} show, respectively. 
Compared to Full Cache, \name{} achieves average 12.6\,\%, 9.4\,\%, and 5.6\,\% lower carbon in multi-turn conversation and document comprehension with two skewness levels, respectively, when serving Llama-3 70B. 
In comparison, when serving Llama-3 8B, the carbon emission reductions are slightly lower -- 10.8\,\%, 7.6\,\%, and 9.7\,\%, respectively, as the 8B model is lightweight and requires a smaller cache. 
The carbon emission savings come from a reduced cache usage over Full Cache, as the cache size labels indicate. 
We also observe that \name{} achieves higher savings in low-CI grids like FR and FI, 14.4--20.3\,\%, as embodied carbon is more prominent in these grids. 
The skewness of the document comprehension dataset also significantly changes the carbon savings and cache sizes. 
The high-skewness case ($\alpha=0.7$) requires a smaller cache size compared to the low-skewness case ($\alpha=0.4$) as fewer prompts are frequently used. 
Thus, the carbon emission savings from \name{} in the high-skewness case are higher. 
We notice that No Cache sometimes achieves lower carbon emissions than \name{}. However, this option is not acceptable due to the violation of SLOs, which will be discussed next.

\textbf{SLO Attainment.}
We evaluate \name{}'s SLO attainment against baselines in the four grids.
\Cref{fig:70B_SLO,fig:8B_SLO} present the P90 TTFT and TPOT through a day of both models, and compare them against the thresholds as specified by the SLOs. 
The P90 latency staying below the SLO-specified thresholds indicates at least 90\,\% SLO attainment. 
Among all scenarios, \name{} only exhibits slightly higher P90 latency than Full Cache, staying below both TTFT and TPOT thresholds as specified by the SLOs, indicating over 90\,\% SLO attainment.
In contrast, No Cache's P90 latency exceeds the SLO constraints frequently, which is not considered a viable solution.

\textbf{Timelines.}
Finally, we showcase timelines that demonstrate the dynamics of cache size and per-prompt carbon emissions under real-time CI and request rate for Llama-3 70B. 
\Cref{fig:multiturn_70B,fig:qa_04_70B} show the timelines for the multi-turn conversation and document comprehension (with skewness $\alpha=0.4$) tasks, respectively. 
The y-axis presents the CI and request rate normalized by their highest values.  
We do not include the No Cache baseline in the timeline as it fails to meet SLO.
Throughout the day, \name{} reduces carbon emissions by 6.9--20.4\,\% and 3.2--16.2\,\% over Full Cache in multi-turn conversation and document comprehension, respectively. 
Among the 4 grids, FR demonstrates the most carbon reduction due to its low CI, amplifying the impact of reducing embodied carbon; \name{} reduces carbon emissions by an average of 15.1\,\% and up to 25.3\,\% reduction in multi-turn conversation, as \name{} utilizes 15 TB less SSD than Full Cache (12 AM of the day in \Cref{fig:multiturn_70B}).
Specifically, cache size is more sensitive to the request rate under low CI, as a larger cache is needed to meet the SLO attainment goal. 
In contrast, under high CI, the cache sizes are generally larger, as caching is effective at reducing operational carbon. 
For example, in \Cref{fig:multiturn_70B}, FR, FI, and ES grids have similarly small cache sizes, while CISO has more variation in cache sizes. 
In \Cref{fig:qa_04_70B}, the difference is less prominent as this task has longer contexts and requires larger caches. 
For the same reason, \name{} saves less carbon when the load is higher because a larger cache is needed (up to the full 16 TB of cache).

\subsection{Ablation Studies}

In this section, we first study benefits from the adaptive caching and then compare \cachename{} with other cache replacement policies. 

\begin{figure}

\begin{subfigure}[b]{1\linewidth}
    \centering
    \includegraphics[width=0.3\linewidth]{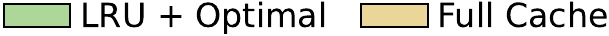}
\end{subfigure}

\begin{subfigure}[b]{0.28\linewidth}
    \centering
    \includegraphics[width=\linewidth,trim={0 0 0 16mm},clip]{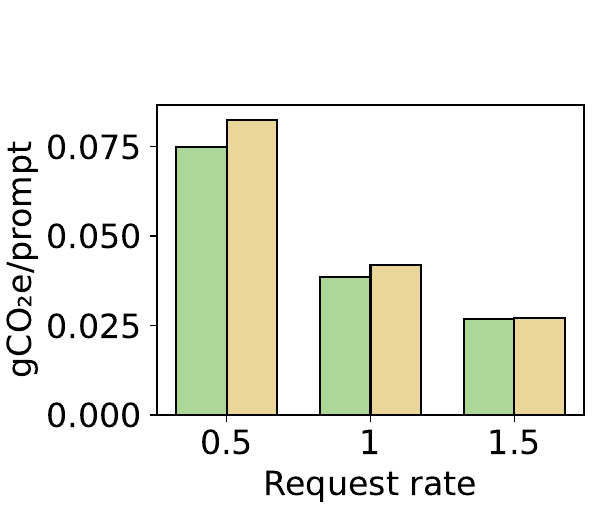}
    \caption{Multi-turn conversation.}
    \label{fig:ablation_LRU_70B_multiturn}
    \Description{}
\end{subfigure}
\hfill
\begin{subfigure}[b]{0.28\linewidth}
    \centering
    \includegraphics[width=\linewidth,trim={0 0 0 16mm},clip]
    {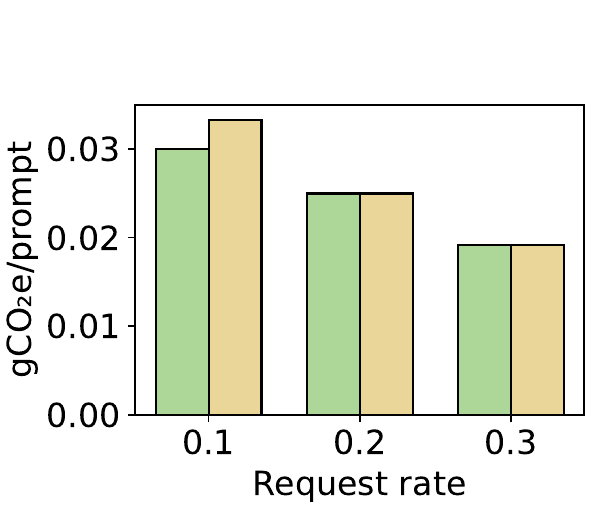}
    \caption{Document comp. ($\alpha$=0.4).}
    \label{fig:ablation_LRU_70B_QA0.4}
    \Description{}
\end{subfigure}
\hfill
\begin{subfigure}[b]{0.28\linewidth}
    \centering
    \includegraphics[width=\linewidth,trim={0 0 0 16mm},clip]
    {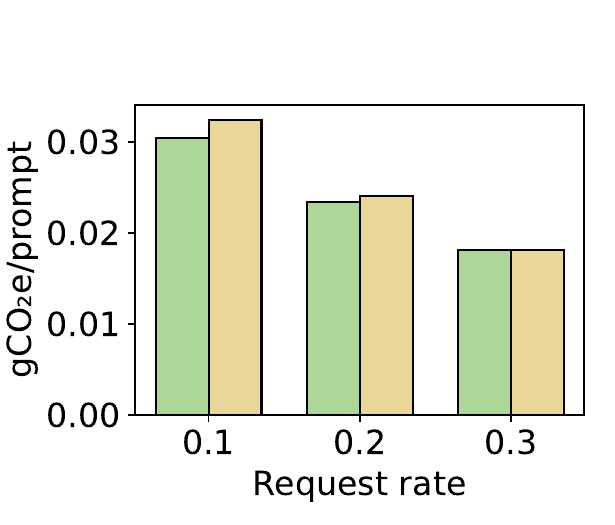}
    \caption{Document comp. ($\alpha$=0.7).}
    \label{fig:ablation_LRU_70B_QA0.7}
    \Description{}
\end{subfigure}
\caption{Ablation study on adaptive caching (Llama-3 70B). }
\label{fig:ablation_study}
\Description{}
\end{figure}

\subsubsection{Adaptive Caching Analysis}
We first conduct an experiment that integrates \name{}'s adaptive caching technique into LMCache \cite{lmcache} while using its original LRU policy (\ie LRU + Optimal). \Cref{fig:ablation_study} shows the carbon reduction under different request rates. 
For this analysis, we use the average carbon intensity of the ES grid, which is 124 \unitCI{}. 
Compared to Full Cache, \name{}'s adaptive caching can bring up to 10.3\,\% carbon savings in multi-turn conversation and 6.6--9.9\,\% reduction in document comprehension.
As the request rate increases, the benefit decreases because \name{} adapts to larger cache sizes to meet the SLO attainment goal. 

\subsubsection{Replacement Policy Comparison}
We then compare the hit rates of the following replacement policies:
(1) \textbf{First In, First Out (FIFO)} replaces blocks in the order in which they were added.
(2) \textbf{Least Recently Used (LRU)} replaces the least recently used cache block. It is the default policy in LMCache \cite{lmcache}.
(3) \textbf{Least Carbon Savings (\cachename{})} is the replacement policy in this work that replaces the entry that brings the least carbon emission savings (details in \Cref{subsec:cache_controller}). 

We evaluate cache sizes ranging from 1 to 16 TB with the Llama-3 70B model.
We define \emph{cache hit rate} as the number of tokens reused from the cache over the total number of input tokens, as shown in \Cref{tab:replacement}.
In the vast majority of cases, our replacement policy \cachename{} outperforms other replacement policies. Although LRU indicates similar performance as \cachename{} with 16TB cache size, \cachename{} demonstrates up to 9\,\% higher hit rate than LRU with smaller cache sizes.
We also notice that, in the document comprehension task, higher skewness ($\alpha=0.7$) leads to better cache hit rates in all policies, compared to low skewness ($\alpha=0.4$).

\begin{figure}
\begin{minipage}[c]{0.64\linewidth}
    \centering  
    \small
    \captionof{table}{Hit rate (Llama-3 70B). }
    \label{tab:replacement}
    \setlength{\tabcolsep}{3pt}
        \begin{tabular}{c ccc ccc ccc}
    \toprule
    \multirow{2}{*}{\makecell{Cache\\Size\\(TB)}} & \multicolumn{3}{c}{ShareGPT}  & \multicolumn{3}{c}{\makecell{TriviaQA,$\alpha$=0.4}}  & \multicolumn{3}{c}{\makecell{TriviaQA,$\alpha$=0.7}} \\
         \cmidrule(lr){2-4} \cmidrule(lr){5-7}  \cmidrule(lr){8-10} 
        & \rotatebox[origin=c]{90}{FIFO} & \rotatebox[origin=c]{90}{LRU} & \rotatebox[origin=c]{90}{\cachename{}}  & \rotatebox[origin=c]{90}{FIFO} & \rotatebox[origin=c]{90}{LRU} & \rotatebox[origin=c]{90}{\cachename{}}  & \rotatebox[origin=c]{90}{FIFO} & \rotatebox[origin=c]{90}{LRU} & \rotatebox[origin=c]{90}{\cachename{}} \\
      \midrule
      1 & 0.05 & 0.05 & \textbf{0.08} & 0.05 & 0.05 & \textbf{0.06} & 0.10 & 0.11  &   \textbf{0.19}  \\
      2 & 0.11 & 0.12  & \textbf{0.17} & 0.08 & 0.08 & \textbf{0.09} & 0.16 & 0.19 &  \textbf{0.25}  \\
      4 & 0.19 & 0.21  & \textbf{0.28} & 0.12 & 0.12 & \textbf{0.14} &0.23 & 0.26 & \textbf{0.35}  \\
      8 & 0.34 & 0.40  & \textbf{0.47} & 0.19 & 0.20 & \textbf{0.22}& 0.37 & 0.38 &   \textbf{0.44}  \\
      % 12 & 0.38 & 0.55 & \textbf{0.59} & x & x & x &0.41 & 0.45 &  \textbf{0.48} \\
      16 & 0.52 & 0.69  & \textbf{0.71} & 0.31 & 0.32 & \textbf{0.33} &0.47 & \textbf{0.52} & \textbf{0.52}  \\
      \bottomrule
    \end{tabular}

\end{minipage}
\hfill
\begin{minipage}[c]{0.32\linewidth}
    \centering
    \includegraphics[width=1\linewidth]{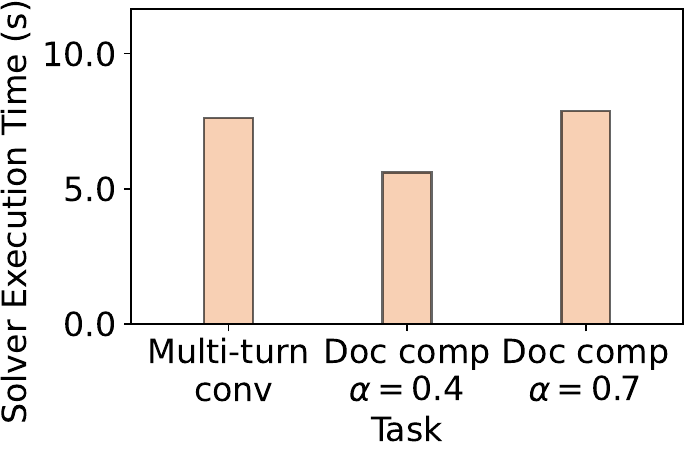}
    \caption{Constraint solver execution time for every cache resize.}
    \label{fig:ilp_runtime}
    \Description{}
\end{minipage}
\end{figure}

\subsection{Constraint Solver Overhead}\label{subsec:ilp_overhead}

We evaluate the execution time of the constraint solver when making each decision, as shown in \Cref{fig:ilp_runtime}. 
The average latency of making a cache resizing decision is as low as 7.03~s. This is a low overhead given that these serving workloads are long-running.

\begin{figure}[t]
    \begin{subfigure}[b]{1\linewidth}
    \centering
    \includegraphics[width=1\linewidth]{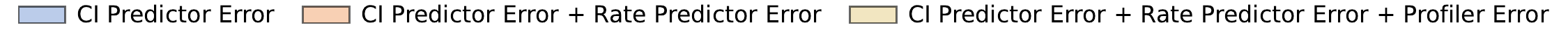}
    \end{subfigure}

    \begin{subfigure}[b]{0.32\linewidth}
    \centering
    \includegraphics[width=1\linewidth]{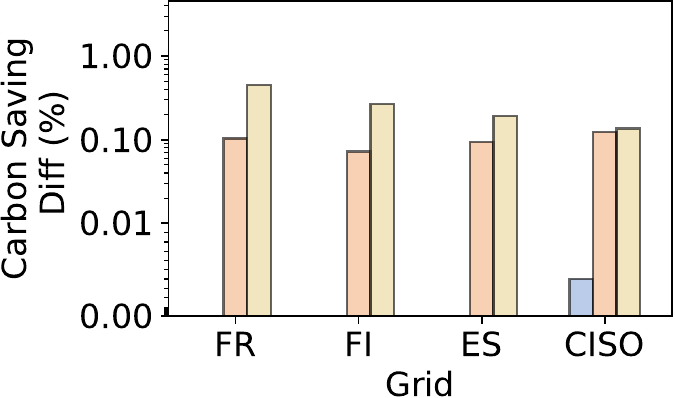}
    \caption{Multi-turn conversation.}
    \end{subfigure}
    \hfill
    \begin{subfigure}[b]{0.32\linewidth}
    \includegraphics[width=1\linewidth]{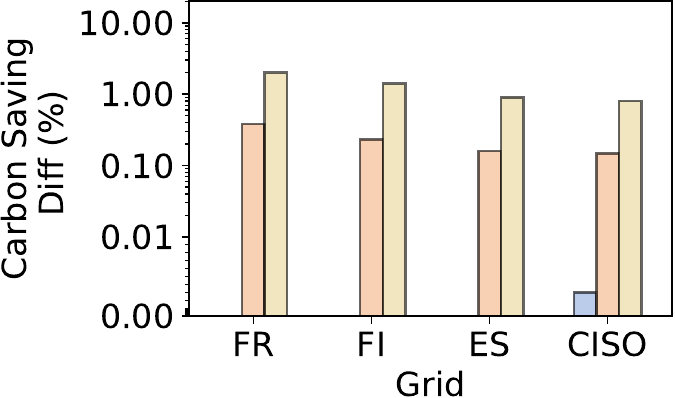}
    \caption{Document comp. ($\alpha=0.4$).}
    \end{subfigure}
    \hfill
    \begin{subfigure}[b]{0.32\linewidth}
    \includegraphics[width=1\linewidth]{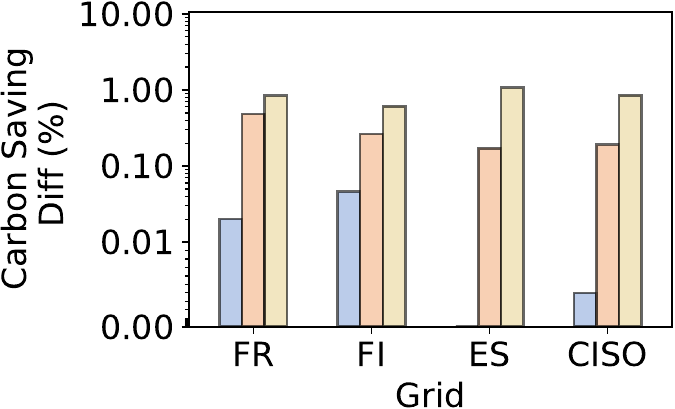}
    \caption{Document comp. ($\alpha=0.7$).}
    \end{subfigure}
\caption{Impact of prediction and profile inaccuracies (Llama-3 70B).} \label{fig:inaccuracies}
\Description{}
\end{figure}

\subsection{Predictors and Profiler Errors} \label{subsec:error_analysis}

\name{} decisions can be affected by three sources of errors, CI predictor, load predictor, and profiler as discussed in \cref{subsec:ilp_solver}. 
We first calculate the Mean Absolute Percentage Error (MAPE) of both predictors by comparing the predictions with the groundtruth. 
The load predictor has an MAPE of 4.3\,\%, and the CI predictor has MAPE values of 12.7\,\%, 15.3\,\%, 11.3\,\%, and 6.8\,\% for FR, FI, ES, and CISO grids, respectively. 
Then, we calculate the distribution differences between the profiling data and the evaluation data. 
In multi-turn conversation and document comprehension tasks, the median context length difference is 5.78\,\% and 1.13\,\%, respectively.

We further evaluate the impact on carbon emission savings by comparing the carbon emissions of \name{} under errors with an ideal scenario that uses the groundtruth. 
\Cref{fig:inaccuracies} shows the reduction of carbon emission savings due to errors. 
To make small values visible, we plot the values on a log scale. 
CI prediction errors only reduce 0.0064\,\% of carbon savings on average compared to the ideal scenario. 
When load prediction errors are included, the reduction is increased to an average of 0.20\,\%. 
Although the load predictor has lower MAPEs than CI, it directly affects the choice of cache size and thus has a higher impact on carbon emissions. 
Finally, the errors from the profiler increase the carbon savings reduction to an average of 0.79\,\%. 
The profiler errors have a relatively higher impact in low-CI grids (FR and FI), as embodied carbon has a higher weight in these grids. 
We conclude that these errors have an overall low impact on carbon emission savings.

\subsection{Sensitivity Studies}

In this section, we vary the decision-making frequency, SSD lifespan, and SSD embodied carbon to analyze their impact on carbon emissions.

\begin{figure}[t]
    \begin{subfigure}[b]{1\linewidth}
    \centering
    \includegraphics[width=1\linewidth]{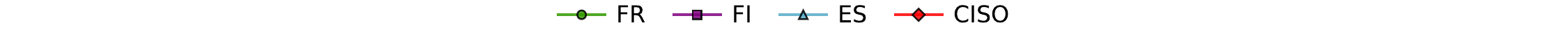}
    \end{subfigure}
    
    \begin{subfigure}[b]{0.32\linewidth}
    \centering
    \includegraphics[width=1\linewidth]{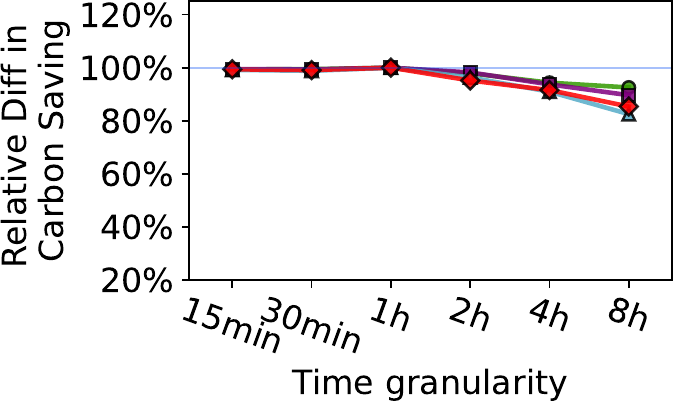}
    \caption{Multi-turn conversation.}
    \end{subfigure}
    \hfill
    \begin{subfigure}[b]{0.32\linewidth}
    \includegraphics[width=1\linewidth]{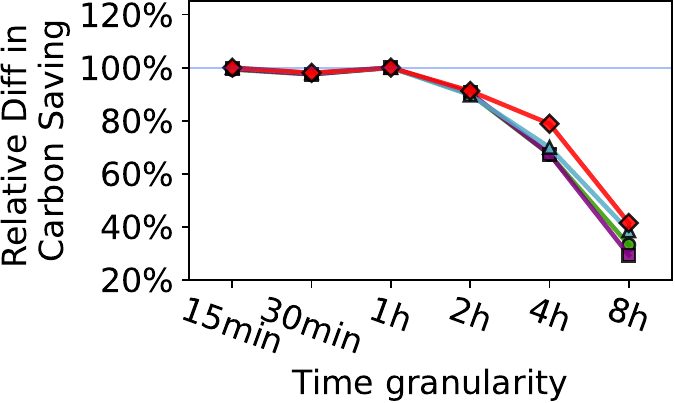}
    \caption{Document comp. ($\alpha=0.4$).}
    \end{subfigure}
    \hfill
    \begin{subfigure}[b]{0.32\linewidth}
    \includegraphics[width=1\linewidth]{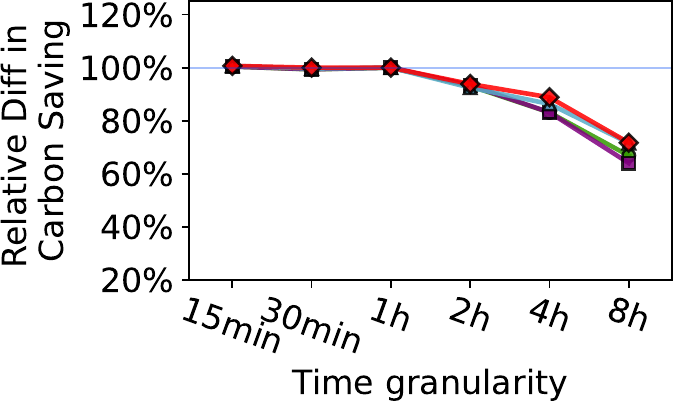}
    \caption{Document comp. ($\alpha=0.7$).}
    \end{subfigure}
\caption{Impact of variable cache resizing intervals (Llama-3 70B). A higher value indicates more savings. }
\label{fig:switching}
\Description{}
\end{figure}

\subsubsection{Cache Resizing Interval} \label{subsubsec:switch}

We evaluate the carbon emission savings of Llama-3 70B over full cache under variable resizing intervals, in addition to the default 1-hour interval. 
\Cref{fig:switching} shows the relative difference compared to the default 1-hour interval.
We also adjust the CI and rate predictors to match the granularity. 
Because the CI data has a minimum granularity of 1 hour due to the dataset, we use the fixed hourly CI value instead when the interval is within 1 hour. 
\name{} sets a sufficiently large cache size during the whole interval to ensure the SLO attainment goal. 
Therefore, the carbon emission savings are significantly reduced under longer intervals across all grids.
Among the tasks, the difference is more significant in the document comprehension that requires larger caches, especially under a lower skewness of $\alpha=0.4$.

\subsubsection{SSD Lifespan}

\begin{figure} 
\begin{subfigure}[t]{1\linewidth}
    \centering
    \includegraphics[width=0.5\linewidth]{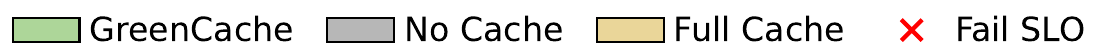}
\end{subfigure}

\begin{subfigure}[t]{0.32\linewidth}
    \centering
    \includegraphics[width=\linewidth,trim={0 0 0 16mm},clip]{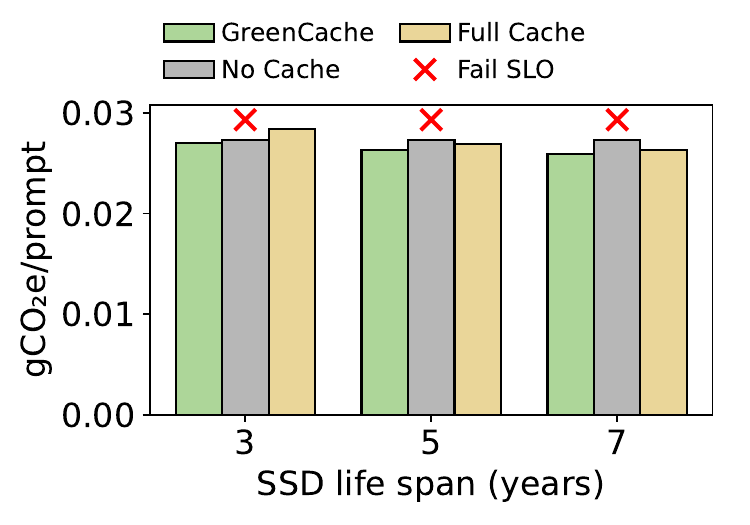}
    \caption{Multi-turn conversation.}
    \label{fig:70B_sensitivity_multiturn_life}
\end{subfigure}
\hfill
\begin{subfigure}[t]{0.32\linewidth}
    \centering
    \includegraphics[width=\linewidth,trim={0 0 0 16mm},clip]{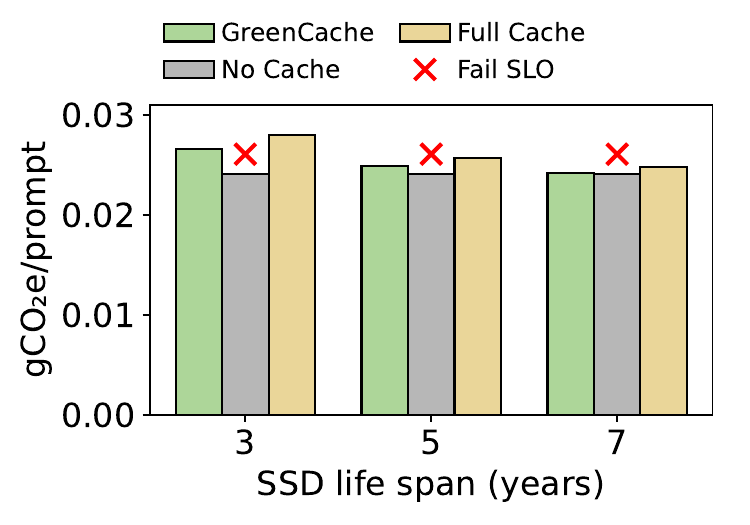}
    \caption{Document comp. ($\alpha=0.4$).}
    \label{fig:70B_sensitivity_QA0.4_life}
\end{subfigure}
\hfill
\begin{subfigure}[t]{0.32\linewidth}
    \centering
    \includegraphics[width=\linewidth,trim={0 0 0 16mm},clip]{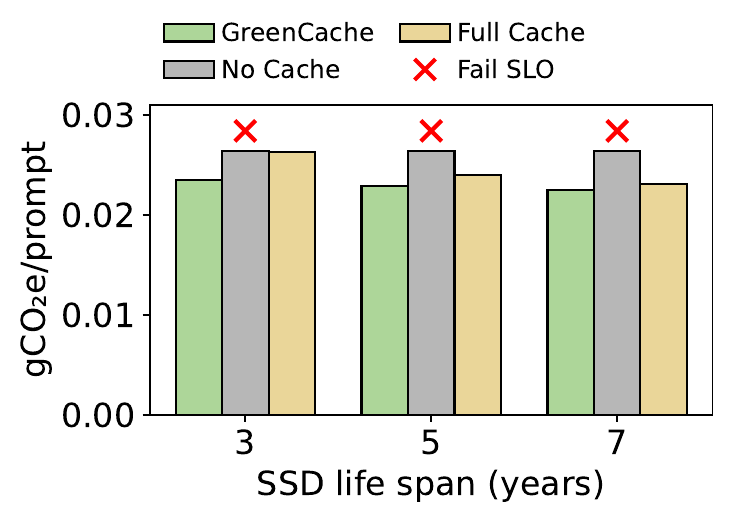}
    \caption{Document comp. ($\alpha=0.7$).}
    \label{fig:70B_sensitivity_QA0.7_life}
\end{subfigure}
\caption{Variable SSD lifespan (Llama-3 70B, ES grid). \label{fig:sensi-lifespan}}
\Description{}
\end{figure}

\begin{figure} 
\centering
\begin{subfigure}[t]{0.5\linewidth}
    \centering
    \includegraphics[width=1\linewidth]{fig/70B_sensitivity_legend.pdf}
\end{subfigure}

\begin{subfigure}[t]{0.32\linewidth}
    \centering
    \includegraphics[width=\linewidth,trim={0 0 0 16mm},clip]{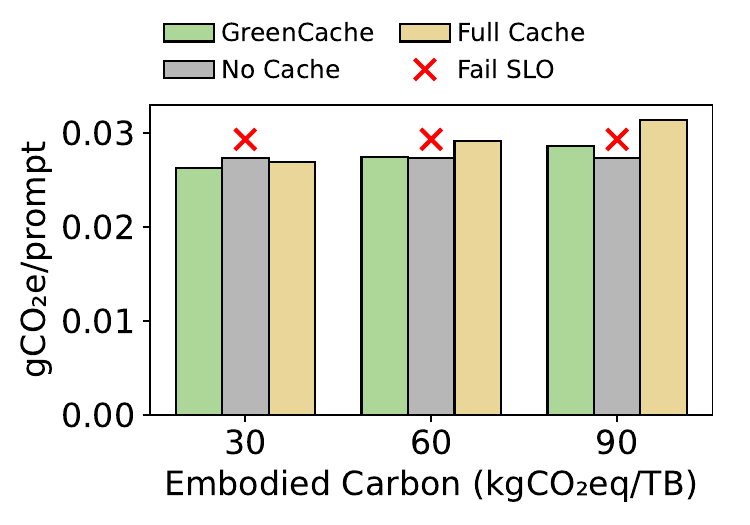}
    \caption{Multi-turn conversation.}
    \label{fig:70B_sensitivity_multiturn_embodied}
\end{subfigure}
\hfill
\begin{subfigure}[t]{0.32\linewidth}
    \centering
    \includegraphics[width=\linewidth,trim={0 0 0 16mm},clip]{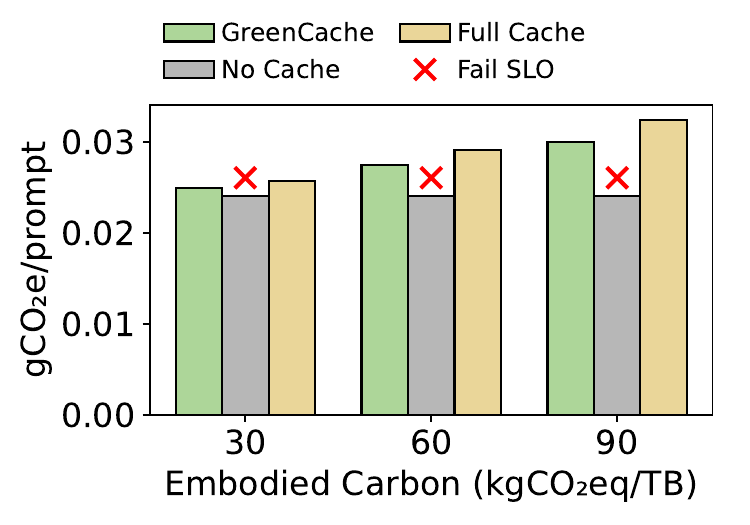}
    \caption{Document comp. ($\alpha=0.4$).}
    \label{fig:70B_sensitivity_QA0.4_embodied}
\end{subfigure}
\hfill
\begin{subfigure}[t]{0.32\linewidth}
    \centering
    \includegraphics[width=\linewidth,trim={0 0 0 16mm},clip]{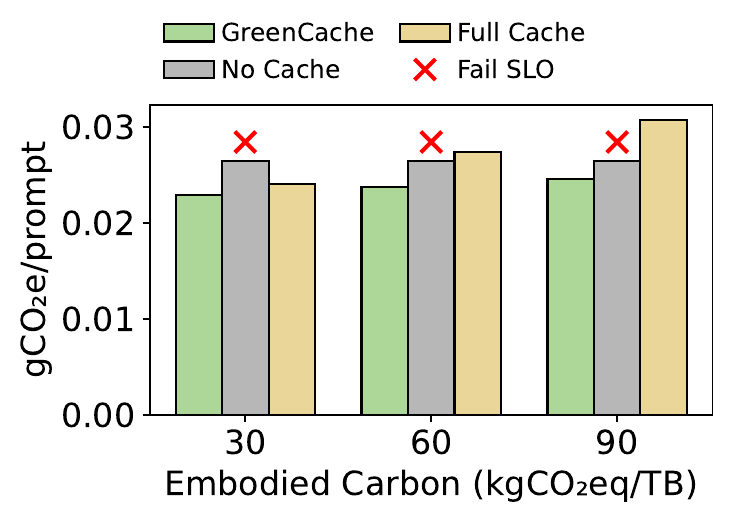}
    \caption{Document comp. ($\alpha=0.7$).}
    \label{fig:70B_sensitivity_QA0.7_embodied}
\end{subfigure}
\caption{Variable SSD embodied carbon (Llama-3 70B, ES grid).\label{fig:sensi-embodied}}
\Description{}
\end{figure}

Prior studies have found that the lifespans of SSDs vary by manufacturer and usage. In this experiment, we study the carbon emission savings from \name{} by varying the SSD lifetime from 3 to 7 years \cite{mcallister2024call,Hyrax,CarbonExplorer,lyuHotCarbon23Myths}. 
We use a fixed rate of 1.5 prompts/s for the multi-turn conversation task and 0.2 prompts/s for document comprehension. 
Like the ablation studies, we also use the average carbon intensity of the ES grid.
\Cref{fig:sensi-lifespan} indicates that a shorter, 3-year SSD lifetime leads to more carbon savings from \name{}, with up to 11.9\,\%. 
Extending SSD lifetime reduces amortized embodied carbon, lowering the savings from resizing the cache.

\subsubsection{SSD Embodied Carbon}

In this work, we model SSD embodied carbon using ACT~\cite{ACT}. 
SSD embodied carbon varies by factors such as the manufacturing technology, fab location, and account scope. Recent studies have reported higher embodied carbon of SSDs~\cite{tannu2023diretysecretssd,bhagavathula2024understanding}. 
Therefore, we evaluate a range of embodied carbon emissions of SSDs, from 30 to 90kgCO\textsubscript{2}e/TB. 
Same as the previous sensitivity study, we evaluate the two tasks with fixed rates of 1.5 and 0.2 prompts/s, respectively. 
We also follow the CI of the ES grid.
\Cref{fig:sensi-embodied} demonstrates that with the embodied carbon of SSD increasing, the carbon benefits from \name{} also increase as embodied carbon has a higher weight in the total emissions. \name{} can reduce more embodied carbon by shrinking the cache size. 
Overall, \name{} saves up to 25\,\% carbon if the SSD's embodied carbon is 90~kgCO\textsubscript{2}e/TB.

\section{Discussion}

In this section, we discuss the assumptions of embodied carbon accounting, potential improvement from more advanced load and CI predictors, and implications on MoE models.

\textbf{Embodied Carbon Accounting.} 
Consistent with prior work~\cite{han2025fair,wang2024designing,li2025ecoserve,greenLLM, greenllm_ieee_cal}, we consider both embodied and operational carbon emissions in cloud environments, emphasizing that embodied carbon is a critical component that should be included.
Different from the assumptions in Sunk Carbon Fallacy~\cite{bashir2024sunk}, which treat unselected resources as idle and their embodied carbon as a fixed sunk cost, this work assumes a cloud scenario in which unused resources are promptly utilized by other workloads. 
Consequently, we model embodied carbon as continuously amortized across workloads over time and explicitly account for it in resource provisioning and management. 
In particular, for storage that this work focuses on, \name{} dynamically resizes storage, and thus the embodied carbon of storage (\ie~SSD) is also attributed based on the actual storage provision.

\textbf{Load Prediction. }
We design a lightweight load predictor, as discussed in \Cref{subsec:load_pred}. 
The main limitation of the request rate study is the lack of real-world LLM serving traces, as the publicly available Azure trace only covers 1 week. 
With more trace data, it is possible to achieve higher prediction accuracy, like prior works in other cloud computing domains \cite{zhang2024loadpred,li2011cloudprophet,zhang2025pclrc}, which can lead to even better carbon emission savings.

\textbf{Carbon Intensity Forecast.}
\name{} uses a recent, state-of-the-art CI prediction model, EnsembleCI \cite{yan2025ensembleci}. 
CI prediction is a well-known problem in sustainable computing \cite{maji2022carboncast,maji2022dacf,yan2025ensembleci}. Like the load predictor, the CI predictor is an easily replaceable module in the \name{} framework.
With a more accurate prediction model, we expect more accurate cache allocation.

\textbf{Implications on MoE models.}
Mixture-of-Experts (MoE) models are widely deployed \cite{dai-etal-2024-deepseekmoe,geminiteam2024gemini15unlockingmultimodal}.
Compared to dense LLMs, MoE models incorporate multiple FFNs as ``experts'' and activate only a subset of them, leading to lower computation overhead. 
On the other hand, the KV cache storage is not reduced.
Therefore, MoE models lower operational carbon emissions but amplify the significance of embodied carbon, making the optimizations provided by \name{} more impactful.

\section{Related Work}

In this section, we discuss caching for LLM systems and other carbon mitigation strategies.  

\textbf{Caching for LLM.}
% \textbf{Prefix Caching:} 
Storing KV caches among different requests is a common optimization \cite{cachedattention, cachegen, hcache}. 
For example, CachedAttention \cite{cachedattention} proposes to reuse the KV cache among multi-turn conversations with TBs of caches. 
CacheGen \cite{cachegen} reduces the network overhead by compressing the transferred KV caches when the SSD is deployed remotely.
HCache \cite{hcache} restores historical LLM states with intermediate activations and manages the chunk-wise storage to alleviate the overhead from I/O or recomputing. 
Instead of the precise matching in tokens, prior works also propose semantic caching. For example, GPTCache~\cite{bang2023gptcache} retrieves the cache by its semantics and returns corresponding responses. 
Unlike these studies that only target performance, \name{} achieves a tradeoff between performance and carbon emissions. 
The existing caching solutions can also be integrated into \name{} framework to reduce their total carbon emissions.

\textbf{Carbon mitigation for LLM.}
Prior works have proposed various solutions to alleviate carbon emissions of LLMs \cite{stojkovic2025dynamollm, faiz2024llmcarbon, greenLLM,greenllm_ieee_cal,samsi2023fromwordstowatt}. For example, DynamoLLM~\cite{stojkovic2025dynamollm} schedules and manages requests on various GPUs under diverse frequency settings to minimize operational carbon while satisfying SLOs.
LLMCarbon~\cite{faiz2024llmcarbon} builds a carbon model to predict the emissions due to LLM training and inference. 
GreenLLM~\cite{greenLLM, greenllm_ieee_cal} notices the heterogeneous demand in LLM and proposes to use different types of GPUs to lower both embodied and operational carbon.
From Words to Watts~\cite{samsi2023fromwordstowatt} benchmarks LLM on various settings to show the energy and performance efficiency. EcoServe~\cite{li2025ecoserve} analyzes the proportion of the operational and embodied carbon, and schedules the LLM applications with resource provision. 
These works provide substantial carbon emission savings from the computation side but overlook the importance of the embodied carbon from storage, which is \name{}'s main angle.

\section{Conclusions}
The wide use of large language models (LLMs) leads to high carbon emissions. 
While most LLM sustainability studies focus on compute-related emissions, we find that storage for saving the historical KV cache is another significant contributor.
Although caching improves performance and reduces operational emissions, it also introduces substantial embodied carbon due to the high-speed, high-capacity SSDs. To address this tradeoff, we introduce \name{}, a carbon-aware caching framework that dynamically profiles LLM tasks and uses an ILP-based optimizer to reconfigure cache size, balancing performance and carbon efficiency while meeting the SLO attainment goal.

\section*{Acknowledgement}

We thank the anonymous reviewers and the shepherd for their valuable feedback on this work, and Shuncheng Jie and Dengwang Tang for proofreading. 
We acknowledge the support of the Natural Sciences and Engineering Research Council of Canada (NSERC) RGPIN-2023-03478.
We also acknowledge the support of the U.S. National Science Foundation (NSF) CCF-2413870.

%%%%%%% -- PAPER CONTENT ENDS -- %%%%%%%%

%%%%%%%%% -- BIB STYLE AND FILE -- %%%%%%%%
\bibliographystyle{ACM-Reference-Format}
\bibliography{bib/misc,bib/ml,bib/sys,bib/carbon}
%%%%%%%%%%%%%%%%%%%%%%%%%%%%%%%%%%%%

\appendix

\section{ILP Complexity} \label{sec:ilp_comp}

\begin{theorem*}[NP-hardness]
The GreenCache optimization problem in Eq.~(\ref{eq:ilp}) is NP-hard, even in a restricted setting where each time step only
allows a binary cache decision (off/on), and the SLO requirement is a global ratio constraint:
at least a configurable fraction $\rho$ of all requests over the horizon satisfy the TTFT threshold and at least
the same fraction $\rho$ satisfy the TPOT threshold (where $\rho$ is part of the input).
\end{theorem*}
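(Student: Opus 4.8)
The plan is to establish NP-hardness by a polynomial-time reduction from the \textsc{0--1 Knapsack} decision problem: given items $\{(v_i,w_i)\}_{i=1}^n$ with integer values $v_i$ and weights $w_i$, a value target $V$, and a weight budget $W$, decide whether some subset $S\subseteq[n]$ satisfies $\sum_{i\in S} v_i \ge V$ and $\sum_{i\in S} w_i \le W$. The central idea is to identify the binary cache decision at each time step with the inclusion of an item in the knapsack, and then to engineer the instance so that the \emph{objective} of \Cref{eq:ilp} measures the selected items' total \emph{weight}, while the \emph{global SLO constraint} measures their total \emph{value}. Under this correspondence, asking whether \name{} admits an SLO-feasible configuration of total carbon at most $W$ is exactly the knapsack question.

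First I would build one time step per item. At time step $i$ I place exactly $v_i$ requests, so the horizon carries $N=\sum_i v_i$ requests. I profile each block so that turning its cache \emph{on} drives the requests' TTFT below the threshold $\tau$ (they satisfy the SLO) and leaving it \emph{off} keeps them above $\tau$ (they violate it). The number of requests meeting the TTFT bound is then precisely $\sum_{i\in S} v_i$ for the selected set $S=\{i : x_i=1\}$, and choosing $\rho = V/N$ turns the TTFT conjunct of \Cref{eq:ilp} into $\sum_{i\in S} v_i \ge V$. The TPOT conjunct is made vacuous by fixing its threshold high enough that every request satisfies it regardless of the caching choice, so it holds automatically.

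The crux, and the step I expect to be the main obstacle, is that every carbon term in \Cref{eq:ilp} is entangled with TTFT, so a single binary variable must encode the knapsack weight through embodied carbon without the operational and non-storage embodied terms contaminating the mapping. I resolve this by neutralizing the other contributions: setting $\texttt{CI}_t = 0$ at every step zeroes the operational term, and setting $C_{\rm e, GPU}=C_{\rm e, CPU}=C_{\rm e, Mem}=0$ zeroes the ``other embodied'' term of \Cref{eq:ilp}, both of which are admissible parameter choices for the abstract optimization. The objective then collapses to the cache embodied carbon of \Cref{eq:ssd-embodied}, and by calibrating the cache footprint and profiled ${\rm TTFT}$ of each block so that caching block $i$ contributes embodied carbon exactly $w_i$ (e.g. fixing $\texttt{LT}_{SSD}=C_{\rm e,SSD}^{\rm Unit}=1$ and setting ${\rm TTFT}^{\rm on}_i S_i = w_i$), the objective becomes exactly $\sum_{i\in S} w_i$.

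Finally I would close the biconditional. Taking the carbon budget $B=W$ in the decision version of \Cref{eq:cache+other}, a \name{} configuration is SLO-feasible with total carbon at most $W$ if and only if its selected set $S$ satisfies $\sum_{i\in S} v_i \ge V$ and $\sum_{i\in S} w_i \le W$; hence the knapsack instance is a yes-instance exactly when the constructed \name{} instance is. Since the construction uses $n$ time steps with a constant number of parameters each, it is polynomial, and because it already lives in the restricted binary-decision, global-ratio regime named in the statement, NP-hardness of the general problem follows a fortiori.
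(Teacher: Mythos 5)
Your reduction is essentially the paper's own proof: both reduce from \textsc{0--1 Knapsack} by mapping each item $k$ to a time step with a binary cache on/off decision, encoding item values as request volumes so that the global ratio constraint with $\rho = V/\sum_k v_k$ becomes $\sum_{k} v_k S_k \ge V$, and encoding item weights as per-step incremental carbon compared against budget $W$. The only differences are cosmetic: the paper ties both TTFT and TPOT to the cache decision instead of making TPOT vacuous (both work), and your calibration parenthetical should set the per-request cached TTFT to $w_i/v_i$ rather than $w_i$ so that the $v_i$ requests of step $i$ contribute total carbon exactly $w_i$ --- a one-line fix, and a detail the paper itself glosses over by directly positing per-step carbon $w_k S_k$.
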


\begin{proof}
We prove NP-hardness by a polynomial-time reduction from the NP-complete \textsc{0--1 Knapsack} decision problem.
A \textsc{Knapsack} instance consists of $m$ items, where item $k$ has weight $w_k \in \mathbb{Z}_{>0}$ and value
$v_k \in \mathbb{Z}_{>0}$, a weight budget $W$, and a target value $V$.
The question is whether there exists a subset $K \subseteq \{1,\dots,m\}$ such that
$\sum_{k\in K} w_k \le W$ and $\sum_{k\in K} v_k \ge V$.

\par\addvspace{0.25\baselineskip}
\noindent\textbf{Decision version of GreenCache.}
Consider the decision form: given a carbon budget $C$, is there a cache plan such that total carbon $\le C$ and
the (global-$\rho$) SLO constraints hold? 

\par\addvspace{0.25\baselineskip}
\noindent\textbf{Construction.}
Given a \textsc{Knapsack} instance, we construct a restricted GreenCache instance as follows.

\begin{itemize}[leftmargin=*]
  \item 
  Set $T=m$, which maps the number of items ($m$) in the Knapsack problem directly to the total number of time steps ($T$) in the GreenCache instance. Map item $k$ to time step $t=k$.
  Restrict the cache decision at each step to $S_k \in \{0,1\}$ (cache off/on).

  \item   The Knapsack target value $V$ maps to the minimum number of requests that must satisfy the SLO constraints in the GreenCache instance.

  \item 
  Set the request volume at time step $k$ to be $\lambda_k := v_k$.
  Let $\Lambda := \sum_{k=1}^m \lambda_k = \sum_{k=1}^m v_k$.
  If $V > \Lambda$, the knapsack instance is trivially infeasible; in this case we output any GreenCache instance that is
  trivially infeasible (e.g., by setting an impossible SLO), and the reduction remains correct. Hence, assume $V \le \Lambda$.

  \item 
  Fix the workload SLO thresholds $\theta_{\text{TTFT}}$ and $\theta_{\text{TPOT}}$.
  For each step $k$, let $q^{\text{TTFT}}_k$ (resp.\ $q^{\text{TPOT}}_k$) denote the number of requests at step $k$
  whose TTFT (resp.\ TPOT) meets the corresponding threshold.
  We enforce that at least a fraction $\rho$ of all requests satisfy each threshold:
  \[
     \sum_{k=1}^m q^{\text{TTFT}}_k \ge \rho \Lambda,\qquad
     \sum_{k=1}^m q^{\text{TPOT}}_k \ge \rho \Lambda.
  \]
  We set the (input) ratio parameter to
  \[
     \rho := \frac{V}{\Lambda}\in(0,1].
  \]
  Hence $\rho\Lambda = V$, and the global-$\rho$ constraints become
  $\sum_k q^{\text{TTFT}}_k \ge V$ and $\sum_k q^{\text{TPOT}}_k \ge V$.

  \item 
  We choose the instance's latency functions/coefficients so that, for each step $k$:
  \begin{enumerate}
    \item If $S_k=1$, then all $\lambda_k$ requests have TTFT and TPOT below the corresponding thresholds, i.e.,
          $q^{\text{TTFT}}_k = q^{\text{TPOT}}_k = \lambda_k$.
    \item If $S_k=0$, then all $\lambda_k$ requests violate at least one of the two thresholds, so $q^{\text{TTFT}}_k = q^{\text{TPOT}}_k = 0$.
  \end{enumerate}
  Therefore, the number of requests satisfying each threshold at step $k$ is $q^{\text{TTFT}}_k = q^{\text{TPOT}}_k = \lambda_k S_k$.

  \item 
  Set the incremental carbon at time step $k$ to be exactly $w_k \cdot S_k$,
  and set the carbon budget to $C := W$.
  Any carbon terms that are independent of the cache decisions (e.g., constant embodied carbons) are
  in the budget $C$ without affecting NP-hardness.
\end{itemize}

\par\addvspace{0.25\baselineskip}
\noindent\textbf{Correctness.}
By construction, for every $k$ we have $q^{\text{TTFT}}_k = q^{\text{TPOT}}_k = \lambda_k S_k$.
Thus, the global SLO constraints become
\[
\sum_{k=1}^m \lambda_k S_k \ge V
\quad\Longleftrightarrow\quad
\sum_{k=1}^m v_k S_k \ge V.
\]
Meanwhile, the carbon budget constraint is exactly
\[
\sum_{k=1}^m w_k S_k \le W.
\]
Therefore, there exists a feasible solution to the constructed GreenCache decision instance iff there exists a feasible
knapsack subset.

\par\addvspace{0.25\baselineskip}
\noindent\textbf{Polynomial-time reduction.}
The construction sets $T=m$ and uses only $\{(w_k,v_k)\}_{k=1}^m$ and simple arithmetic. The rational parameter
$\rho = V/\Lambda$ has polynomial bit-length (it can be represented by the integer pair $(V,\Lambda)$).
Hence the reduction runs in polynomial time.

\par\addvspace{0.25\baselineskip}
Therefore, the GreenCache problem is NP-hard even in this restricted special case; consequently, the general
optimization problem in \Cref{eq:ilp} is NP-hard.
\end{proof}

% \received{October 2025}
% \received[revised]{December 2025}
% \received[accepted]{January 2026}

\end{document}